\documentclass{amsart}
\usepackage[margin=1in]{geometry}
\usepackage{hyperref}
\usepackage{graphicx}
\usepackage{amsmath}
\usepackage{amssymb}
\usepackage{amsthm}
\usepackage{dsfont}
\usepackage{mathrsfs}
\usepackage{sansmath}
\usepackage{xcolor}
\usepackage{comment}
\usepackage{booktabs}

\usepackage{mathtools}
\usepackage{enumitem}
\usepackage[mathscr]{euscript}
\newcommand{\euscr}[1]{\EuScript{#1}}

\usepackage{appendix}
\usepackage{bm, bbm}
\usepackage{caption}
\numberwithin{equation}{section}
\newtheorem{theorem}{Theorem}[section]

\newtheorem{definition}[theorem]{Definition}

\newtheorem{remark}[theorem]{Remark}
\newtheorem{example}[theorem]{Example}

\DeclareMathOperator*{\argmin}{arg\,min}

\DeclareMathOperator{\Var}{Var}

\newcommand{\R}{\mathbb{R}}

\newcommand{\dd}{\mathrm{d}}

\newcommand{\E}{\mathbb{E}}

\newcommand{\bR}{\mathbb{R}}

\DeclarePairedDelimiterX{\divg}[2]{(}{)}{%
  #1\;\delimsize\|\;#2%
}

\begin{document}

\title{Active portfolio management in concentrated equity markets}

\author{Brian Ceco}
\address{Department of Mathematics, University of Toronto}
\email{brian.ceco@mail.utoronto.ca}

\author{Xiaofei Shi}
\address{Department of Statistical Sciences, University of Toronto}
\email{xf.shi@utoronto.ca}

\author{Ting-Kam Leonard Wong}
\address{Department of Statistical Sciences, University of Toronto}
\email{tkl.wong@utoronto.ca}

\keywords{}
\date{}

\begin{abstract}
The equal-weighted portfolio is a passive, rule-based strategy that has historically been difficult to outperform, delivering higher returns than the capitalization-weighted ``market" benchmark across many markets and periods. Stochastic portfolio theory (SPT) reveals that this relative performance is regime dependent, with the equal-weighted portfolio underperforming during periods of increasing market concentration and high correlations, particularly market bubbles. These observations have motivated us to formulate and solve a stochastic control problem in which an investor actively allocates between the equal-weighted and market portfolios. The investor bases their allocation decisions on forecasts made under a flexible stochastic diversity--dispersion (SDD) model. 
Using a quadratic surrogate for implementation frictions, we characterize the optimal allocation through a linear forward--backward SDE and obtain an explicit ``aiming in front of a moving target'' representation of the optimal trading rate, in the spirit of G\^arleanu and Pedersen. The penalty parameters are calibrated in sample to match the cumulative wealth effect of proportional transaction costs, while out-of-sample performance is evaluated with those costs deducted directly from portfolio wealth. 
Using historical S\&P 500 data, we show that a mean-reverting SDD specification reproduces several empirical features of market diversity and dispersion. In out-of-sample backtests from 1995 to 2024, the resulting strategies deliver higher cumulative net returns than both the equal-weighted and market portfolios, and higher information ratios than the equal-weighted portfolio after 15-basis-point proportional transaction costs.
\end{abstract}

\maketitle

\section{Introduction} \label{sec:intro}

Since the development of the Capital Asset Pricing Model (CAPM) in the 1960s \cite{L65, M66, SHP64, T61}, a central question in active portfolio management has been whether and how investors can consistently outperform the market. Building on the mean-variance framework of Markowitz \cite{MAR52}, the CAPM identifies the market portfolio as efficient, contributing to the popularity of index funds \cite{M99}. A surprising empirical finding, however, is that the simple equal-weighted portfolio has outperformed the market portfolio across a wide range of markets and sample periods. The equal-weighted (or ``$1/n$") portfolio invests an equal proportion of wealth in each stock, and in particular is not susceptible to the forecasting error associated with classical mean-variance portfolios. DeMiguel et~al.~\cite{DGU09} demonstrated in an extensive comparative study that the equal-weighted portfolio delivered superior risk-adjusted performance compared to a wide variety of portfolio construction methods, including the market portfolio and Markowitz's mean-variance portfolio. At the same time, the relative performance of the equal-weighted portfolio can deteriorate sharply when market capitalization becomes more concentrated, such as during the dot-com bubble, or the current artificial intelligence boom. See \cite{CSW25, RX20, TM21} for recent empirical studies.

Stochastic Portfolio Theory (SPT) \cite{FER02, FK09}, reviewed in Section \ref{sec:prelim}, provides a useful framework for understanding the performance of portfolios relative to the market. SPT is built upon a multivariate market model, and aims to study and exploit  \emph{macroscopic} properties of equity markets from a descriptive rather than normative perspective. Under SPT, the relative log return of the equal-weighted portfolio with respect to the market portfolio decomposes into two terms: the change in a \emph{market diversity} process and an accumulated \emph{dispersion}, or excess growth, term. Diversity measures how evenly capitalization is distributed across firms, while dispersion captures the benefit of volatility harvesting \cite{BNW15, PW13} under systematic rebalancing. This decomposition not only explains why the equal-weighted portfolio underperforms when market concentration increases, but also suggests parsimonious modelling of interpretable macroscopic quantities rather than the full high-dimensional dynamics of all individual stocks \cite{CW25}. More generally, SPT characterizes the relative performance of a wide class of portfolios known as functionally generated portfolios.

\medskip

In this paper, we develop a general and tractable stochastic control framework to address the following natural and important question: can we use information about market diversity and dispersion to exploit the benefits of diversification and volatility harvesting, while protecting the portfolio from increases in market concentration? We achieve this in two steps:
\begin{itemize}
\item First, we introduce a \emph{stochastic diversity--dispersion (SDD) model} for the joint dynamics of market diversity and dispersion (Section~\ref{sec:sdd_model}). 
It provides an effective dimension reduction, preserving the key quantities that determine relative performance without requiring a full specification of the joint dynamics of all stocks. 
A tractable special case is the \emph{mean-reverting model} where diversity is mean-reverting with stochastic volatility and the logarithm of dispersion follows an Ornstein--Uhlenbeck (OU) process. 

\item Second, we formulate a tractable dynamic allocation problem in which an investor adjusts an equal-weighted tilt relative to the market portfolio. The objective balances expected relative log return and active risk against quadratic penalties on the size and adjustment rate of the tilt. These penalties serve as a surrogate for proportional transaction costs, with their coefficients calibrated in sample to match the cumulative active wealth drag generated by proportional costs. All out-of-sample performance is evaluated by deducting proportional costs directly from portfolio wealth. In Theorem~3.1, we characterize the resulting strategy through a linear forward--backward SDE and derive an explicit ``aiming in front of a moving target'' representation of the optimal trading rate, in the spirit of G\^arleanu and Pedersen~[18]. For the mean-reverting specification, Theorem~3.4 further provides an explicit one-dimensional integral representation for the conditional forecasts entering the optimal strategy. Although we focus on the equal-weighted portfolio, the framework extends naturally to other functionally generated portfolios.
\end{itemize}
In Section \ref{sec:empirics}, we perform a series of experiments to validate our approach with simulated and US stock data.\footnote{Our experiments can be reproduced via the following Github repository: \url{https://github.com/brianceco/apm-in-cem}.} We calibrate the mean-reverting model to historic S\&P 500 market diversity and realized dispersion, and show that it is able to reproduce several stylized facts documented in \cite{CSW25}. We investigate how our portfolios perform under simulated diversity and dispersion data, before backtesting the portfolio with real data under proportional transaction costs, using the methodology in \cite{RX20}. Under both the mean-reverting model and a discrete-time data-driven \emph{trending model} introduced in Section \ref{sec:trending}, our portfolio is able to outperform both the capitalization-weighted and equal-weighted portfolios in the out-of-sample period 1995--2024, which contains several episodes of increasing market concentration. In Appendix \ref{sec:fama_french_test}, we report regression results from Fama--French factor modelling, which suggest that our portfolio is able to generate positive residual alpha.

To the best of our knowledge, this is the first study to combine (i) an explicit stochastic model for the joint dynamics of market diversity and dispersion, (ii) optimization of a \emph{time-varying} functionally generated portfolio under (quadratic) trading-cost penalties, and (iii) a stochastic optimal control formulation within SPT. The closest related approaches are \cite{AJ18} and \cite{TM21}. Al-Aradi and Jaimungal \cite{AJ18} formulate a stochastic control problem that balances benchmark outperformance against tracking and permits functionally generated portfolios as benchmarks, but do not jointly model diversity and dispersion or penalize changes in the equal-weighted tilt. Taljaard and Maré \cite{TM21} use linear-regression forecasts to switch dynamically between the equal-weighted and market portfolios. In our terminology, their strategy resembles a bang--bang control and may therefore generate abrupt re-allocations. In contrast, our framework produces a continuously varying equal-weighted tilt that explicitly balances expected relative performance against active risk and the costs of maintaining and adjusting the position.

More broadly, much of the literature on optimizing functionally generated portfolios focuses on selecting a \emph{fixed} portfolio. For example, \cite{W15} studies a shape-constrained optimization problem over functionally generated portfolios, while \cite{CW22} uses regularized empirical risk minimization to optimize over a class of rank-based portfolios. Using machine-learning methods, \cite{SK16} selects the parameters of a diversity-weighted portfolio. Under ergodicity assumptions and asymptotic relative-growth objectives, \cite{itkin2022robust, IL24, kardaras2012robust} characterize optimal portfolios that are themselves functionally generated.

Although stochastic control has been central to continuous-time portfolio choice since Merton \cite{MER69}, its explicit application within SPT remains comparatively limited. In addition to \cite{AJ18}, relevant examples include \cite{AJ21}, which optimizes risk-adjusted active return in the presence of latent factors, and \cite{I25b}, which studies an investment-consumption problem in a rank-based market. A related empirical contribution is \cite{AFF07}, which develops a statistical model for forecasting market diversity from macroeconomic variables and applies it to a diversity-based strategy for the S\&P~500. We refer to \cite[Section~6.3]{CSW25} for a broader discussion of the literature.

\subsection{Organization}
The rest of the paper is organized as follows. Section~\ref{sec:prelim} reviews stochastic portfolio theory and derives the relative value of mixtures between the equal-weighted and market portfolios. Section~\ref{sec:sdd_model} introduces the stochastic diversity--dispersion model and the mean-reverting specification. Section~\ref{sec:stochastic_control} formulates and solves the stochastic control problem. Section~\ref{sec:empirics} presents the calibration and backtesting results. Appendix \ref{sec:proofs} contains proofs of theoretical results, and Appendix \ref{sec:fama_french_test} provides details of Fama--French regressions.

\subsection{Notation}
We let $(\Omega, \euscr{F}, \mathbb{F}=\{\euscr{F}_t\} _{t \in [0,T]}, \mathbb{P})$ denote a filtered probability space satisfying the usual conditions. We assume that $\mathbb F$ is the usual augmentation of the filtration generated by a $d$-dimensional Brownian motion $B$, with $d$ sufficiently large to support all Brownian motions used below.
All Brownian motions are understood to be $\mathbb F$-Brownian motions.  We let $\mathbf{L}^2_{\mathbb{F}}(T;\mathbb{R}^{n})$ denote the Hilbert space of processes in $L^2(\mathcal{B}([0,T])\otimes \euscr{F}, dt \times d\mathbb{P})$ which are progressively measurable. Finally, we let $\E_t[ \cdot]$ denote the conditional expectation given $\euscr{F}_t$.

\section{Stochastic portfolio theory and a stochastic diversity--dispersion model} \label{sec:prelim}
In Section~\ref{sec:SPT}, we review the part of stochastic portfolio theory that motivates our dynamic portfolio choice problem. The key idea is that if we restrict attention to time-varying mixtures of the equal-weighted portfolio and the market portfolio, then it suffices to track the joint dynamics of market diversity and dispersion; empirical stylized facts for these quantities were recently investigated in \cite{CSW25}. This drastically reduces the dimensionality of the optimization and naturally leads to our \emph{stochastic diversity--dispersion} (SDD) model, introduced in Section~\ref{sec:sdd_model}. More generally, the equal-weighted portfolio may be replaced by any functionally generated portfolio, with the corresponding notions of diversity and dispersion; see Remark~\ref{rmk:fgp}.

\subsection{Market diversity and dispersion}  \label{sec:SPT} 
To concentrate on the core financial ideas, we build our theory on a {\it closed market} where the investment universe is fixed throughout the horizon. Also, since the focus is on equity portfolio management, we do not include a risk-free asset or bonds explicitly in the model. For further details, we refer the reader to \cite{FER02}, which is the standard reference for this classic framework; also see \cite[Appendix 1]{CSW25} for a brief treatment. In Remark \ref{rmk:leakage}, we include a technical discussion about the \emph{leakage effect} which arises when portfolios can only invest in the largest $n$ stocks of the market. This is relevant in our backtest since our benchmark S\&P 500 is approximately a rank-based capitalization-weighted portfolio ($n = 500$) not equal to the entire market. The assumption of market closedness can be partially relaxed in models of open markets \cite{IL24-2, KK21} and markets with a stochastic number of stocks \cite{BKT24, KS16}.

Let $n \geq 2$ be the total number of stocks, and let $T > 0$ be a finite time horizon.  Throughout this paper, the unit of time is one year unless otherwise stated. On a given filtered probability space satisfying the usual conditions, we let $S(t) = (S_1(t), \ldots, S_n(t))$ be an $n$-dimensional continuous semimartingale, with values in the positive quadrant $(0, \infty)^n$, that represents the market capitalizations of the stocks. More specifically, we assume that the vector $\log S(t) \coloneqq (\log S_1(t), \ldots, \log S_n(t))$ of log market capitalizations satisfies the stochastic differential equation (SDE)
\begin{equation}\label{eq:logprice}
d \log S(t) = \gamma(t)dt + \xi(t) dW(t), \quad S(0) = s_0 \in (0, \infty)^n,
\end{equation}
where $W(t)$ is an $n$-dimensional standard Brownian motion. Here, $\gamma(t) = (\gamma_i(t))$, the vector of growth rates, and $\xi(t) = (\xi_{ij}(t))$, the diffusion matrix, are progressively measurable processes satisfying suitable integrability conditions, see \cite[Chapter 1]{FER02}. Define $\sigma(t) \coloneqq \xi(t) \xi(t)^{\top}$, and note that $\sigma_{ij}(t) \dd t= \dd \langle \log S_i, \log S_j \rangle(t)$.

Let $\bar{S}(t) \coloneqq S_1(t) + \cdots + S_n(t)$ denote the total market capitalization. The probability vector 
\begin{equation} \label{eqn:market.weight}
\mu(t)\coloneqq \frac{S(t)}{\bar{S}(t)}
\end{equation}
takes values in the open unit simplex $\Delta_n \coloneqq \{ p = (p_1, \ldots, p_n) \in (0, 1)^n, \sum_{i=1}^n p_i=1 \}$ and represents the vector of {\it market weights} of the stocks. It is helpful to think of $\mu(t)$ as the weights of the market portfolio (Example \ref{eg:market.portfolio}). In practice, this may be proxied by a capitalization-weighted index such as the S\&P 500.

\emph{Market diversity} measures the concentration of market capitalization, and is often quantified by Shannon entropy \cite{F99}. In this paper, we use the diversity measure that generates the equal-weighted portfolio through \eqref{eqn:fgp}, see Example~\ref{eg:equal.weighted.portfolio}.

\begin{definition}[Market diversity] \label{def:market.diversity}
We define \emph{market diversity} by
\begin{equation} \label{eqn:market.diversity}
\varphi_{\mathrm{ew}}(t) \coloneqq \sum_{i = 1}^n \frac{1}{n} \log \mu_i(t).
\end{equation}
\end{definition}

Note that
\begin{equation} \label{eqn:cross.entropy}
\varphi_{\mathrm{ew}}(t) = - H(\bm{1} / n) - H\divg{ \bm{1}/n }{\mu(t)},
\end{equation}
where $H$ and $H\divg{\cdot}{\cdot}$ denote Shannon entropy and relative entropy, respectively. Thus,  $\varphi_{\mathrm{ew}}(t)$ is the negative cross-entropy of $\bm{1}/n$ relative to $\mu(t)$, and can be equivalently written as 
$\varphi_{\mathrm{ew}}(t)= \log (\mu_1(t) \cdots \mu_n(t))^{1/n}$, which is the logarithm of the geometric mean of the market weights.
In Figure \ref{fig:diversity} we plot the monthly time series of $\varphi_{\mathrm{ew}}$ for the S\&P 500 universe. We see that in recent years, market diversity has decreased to a level not previously seen since the dot-com bubble of the late 1990s. As was the case back then, this rise in market concentration is largely attributable to the overperformance of the technology sector, this time driven by advances in artificial intelligence and related technologies. At this point, it is unclear how long it will take, if ever, for diversity to ``revert'' to its historical average (about $-6.95$).

\begin{figure}[!htb]
       \includegraphics{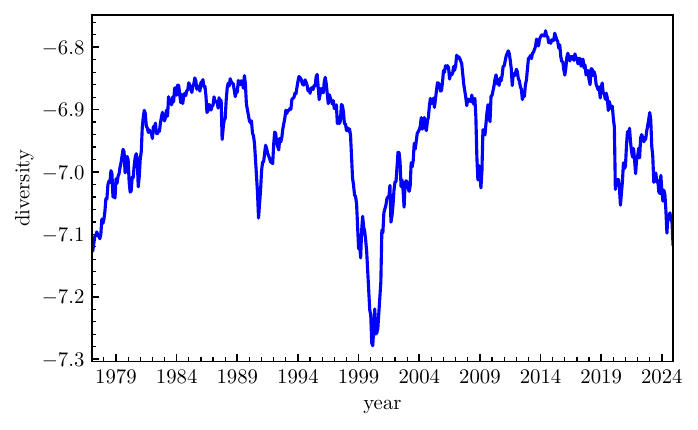}
       \vspace{-0.2cm}
       \caption{Monthly time series of market diversity for the S\&P 500 universe.
       }
       \label{fig:diversity}
\end{figure}

A {\it portfolio} is a progressively measurable process $\pi(t)$ satisfying $\bm{1}^{\top} \pi(t) \equiv 1$ and such that \eqref{wealthsde} below has a unique strong solution; it represents the vector of portfolio weights (rather than number of shares). The portfolio is said to be {\it long-only} if $\pi_i(t) \geq 0$ for all $1 \leq i \leq n$ and $0 \leq t \leq T$. The following two examples, corresponding to the two distributions in \eqref{eqn:cross.entropy}, play crucial roles in our paper.

\begin{example} [Market portfolio] \label{eg:market.portfolio} 
The market portfolio is denoted by $\pi(t) \equiv \mu(t)$. %
\end{example}

\begin{example} [Equal-weighted portfolio] \label{eg:equal.weighted.portfolio}
The equal-weighted portfolio $\pi^{\mathrm{ew}}$ is given by $\pi^{\mathrm{ew}}(t) \coloneqq \frac{1}{n} \bm{1}$, where $\bm{1} \coloneqq (1, \ldots, 1)^{\top}$ is the vector of ones.
\end{example}

Given a portfolio $\pi$, its frictionless (gross or nominal) {\it wealth process} $Z^{\pi}$ is given as the solution of the SDE
\begin{equation}\label{wealthsde}
\frac{dZ^{\pi}(t)}{Z^{\pi}(t)} = \sum_{i = 1}^n \pi_i(t) \frac{d S_i(t)}{S_i(t)}, \quad Z^{\pi}(0) = z_0 \in (0, \infty)
,
\end{equation} 
where $z_0$ is the initial wealth. For example, the value of the market portfolio is given by
\[
Z^{\mu}(t) = z_0 \frac{\bar{S}(t)}{\bar{S}(0)}, \quad t \geq 0.
\]

Given the importance of the market portfolio as a benchmark for portfolio management, we use its value as the num\'{e}raire when evaluating the performance of other portfolios.

\begin{definition}[Relative value]
The relative value of a portfolio $\pi$ with respect to the market portfolio $\mu$ is defined by
\begin{equation} \label{eqn:relative.value}
V^{\pi}(t) \coloneqq \frac{Z^{\pi}(t)}{Z^{\mu}(t)}, \quad \text{where } Z^{\pi}(0) = Z^{\mu}(0) \text{ so that } V^{\pi}(0) = 1.
\end{equation}
\end{definition}

\begin{remark}
In \eqref{wealthsde}, $Z^{\pi}$ denotes the nominal (dollar) value which neglects inflation. Its effect is automatically cancelled out in the ratio \eqref{eqn:relative.value} which defines the relative value.
\end{remark}

In the continuous-time setting, there is a discrepancy between the (instantaneous) growth rate of a portfolio, i.e., the drift coefficient of $\log Z^\pi$, and the weighted average of the growth rates of the individual stocks due to the It\^o correction term. This motivates the notion of the \emph{dispersion} of a portfolio.

\begin{definition}[Dispersion]  \label{def:dispersion}
Given a portfolio $\pi$, we define its dispersion by
\begin{equation} \label{eqn:dispersion}
\delta_{\pi}(t) \coloneqq \sum_{i = 1}^n \pi_i(t) \sigma_{ii}(t) - \sum_{i, j = 1}^n \pi_i(t) \pi_j(t) \sigma_{ij}(t).
\end{equation}
We denote by $\delta_{\mathrm{ew}}$ the dispersion of the equal-weighted portfolio.
\end{definition}

Our terminology is explained in Remark \ref{rmk:dispersion}. The dispersion or equivalently the excess growth rate arises naturally in the dynamics of the portfolio value. Applying It\^{o}'s lemma to \eqref{wealthsde}, we have
\begin{equation*}
d \log Z^{\pi}(t) = \sum_{i = 1}^n \pi_i(t) d \log S_i(t) + \frac{1}{2} \delta_{\pi}(t) d t.
\end{equation*}
In fact, by a {\it num\'{e}raire invariance property} \cite[Lemma 1.3.4]{FER02}, we also have
\begin{equation} \label{eqn:decomp}
d \log V^{\pi}(t) = \sum_{i = 1}^n \pi_i(t) d \log \mu_i(t) + \frac{1}{2} \delta_{\pi}(t) d t.
\end{equation}

For a long-only portfolio $\pi$, the dispersion $\delta_{\pi}(t)$ is non-negative, and can be interpreted as the difference between the weighted average instantaneous variance of the stocks and the instantaneous variance of the portfolio. As such, it may be regarded as a measure of the diversification of a portfolio as advocated in \cite{BF92}. Writing $\sigma_{ij}(t) = \sigma_i(t) \sigma_j(t) \rho_{ij}(t)$, where $\sigma_i(t) \coloneqq \sqrt{\sigma_{ii}(t)}$ and $\rho_{ij}(t)$ is the instantaneous correlation, shows that
\[
\delta_{\pi}(t) = \sum_{i = 1}^n \pi_i(t) (1 - \pi_i(t)) \sigma_{i}^2(t) - \sum_{i \neq j} \pi_i(t) \pi_j(t) \sigma_i(t) \sigma_j(t) \rho_{ij}(t)
\]
increases if the individual volatilities increase and the correlations decrease. Thus, the dispersion captures how much the stocks move relative to each other. This is closely related to the work \cite{SR00} whose authors define dispersion as cross-sectional correlation.

\begin{remark}[Dispersion indices]\label{rmk:dispersion}
In September 2023 and May 2025 respectively, S\&P Dow Jones Indices and Cboe launched the implied and realized \emph{dispersion indices} for the S\&P 500 universe \cite{DSPX}. They provide past and forward looking measures of cross-sectional volatility to aid investment decisions, and their definitions are closely related to the dispersion (or excess growth) of portfolios as defined above.

Given a portfolio $\pi$ and a window $h > 0$, consider, at each time $t$, the annualized variance
\begin{equation} \label{eqn:cross.sectional.variance}
\frac{1}{h} \sum_{i = 1}^n \pi_i(t - h) (R_i(t-h, t) - R_{\pi}(t - h, t))^2,
\end{equation}
where $R_i(t - h, t)$ is the return of stock $i$ over $[t - h, t]$, and $R_{\pi}(t - h, t)$ is the return of the portfolio $\pi$. We let $r_i = \log(1 + R_i)$ and $r_{\pi} = \log(1 + R_{\pi})$ be the log returns. The realized dispersion index returns the (square root of) \eqref{eqn:cross.sectional.variance} for the S\&P 500 universe and its weights, and $h$ equals $1$ or $30$ days. On the other hand, the implied dispersion index uses options and VIX data to compute a forecast of \eqref{eqn:cross.sectional.variance} over the horizon $[t, t + h]$; theoretically, it corresponds to the expected value under a risk-neutral pricing measure.

For our purposes, we define the (annualized)  \emph{realized dispersion} of $\pi$ over a horizon $[t - h, t]$ by\begin{equation}\label{eq:realized_dispersion}
\mathrm{RD}_{\pi}(t, h) \coloneqq  \frac{1}{h}\sum\limits_{k = 1}^N \left( \sum\limits_{i=1}^{n} \pi_{i}(t_{k-1}) (r_{i}(t_{k-1}, t_k) - r_{\pi}(t_{k-1}, t_k))^2 \right),
\end{equation}
where $(t_k)_{k = 0}^N$ is a given partition of $[t - h, t]$. That is, we compute for each subinterval the variance of the log returns weighted by the portfolio, then sum over time and normalize. In our implementation, we use the partition corresponding to daily returns. Both \eqref{eqn:cross.sectional.variance} and \eqref{eq:realized_dispersion} may be regarded as estimators of the average dispersion
\[
\frac{1}{h} \int_{t - h}^t \delta_{\pi}(s) d s,
\]
and the approximation becomes exact when $h \downarrow 0$ (for the former) or $\max_k |t_{k+1} - t_k| \downarrow 0$ (for the latter). This is analogous to the usual realized volatility as an estimator of integrated variance.

In Figure \ref{fig:dispersion}, we plot the realized dispersion following \eqref{eq:realized_dispersion} for the equal-weighted portfolio, where $h = 1/12$ equals a month. We see clearly clustering of volatility. In Figure \ref{fig:dispersion}, we plot the monthly realized dispersion for the market portfolio (in red), together with the squared S\&P implied dispersion index (DSPX) \cite{DSPX} (in gray), for the more recent period 2014--2024. Discrepancies between implied dispersion and realized dispersion arise partly as a hedge against the risk of upward jumps in correlations. This market phenomenon is known as the \emph{correlation risk premium}, and efforts to predict it form the basis of \emph{dispersion trading} \cite{DMV09}.

\end{remark}

\begin{figure}[!ht]
       \includegraphics{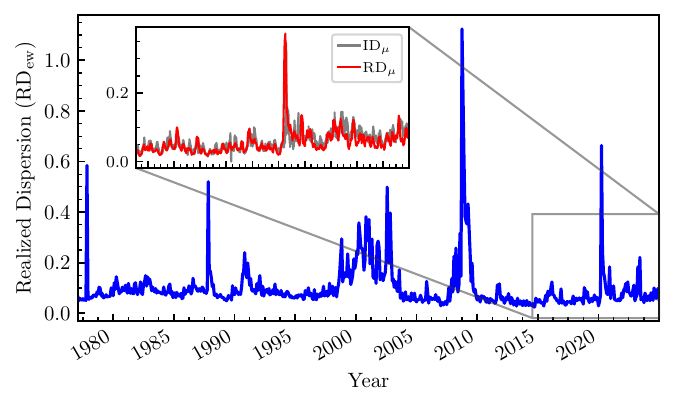}
       \caption{Monthly realized dispersion of the equal-weighted portfolio (blue) and the market portfolio (red) for the S\&P 500 universe. We contrast $\mathrm{RD}_\mu$ with implied dispersion $\mathrm{ID}_\mu$ (gray), with the latter given by the (squared) S\&P500 implied dispersion index (DSPX) \cite{DSPX}.}
       \label{fig:dispersion}
\end{figure}

Let $V^{\mathrm{ew}} \coloneqq V^{\pi^{\mathrm{ew}}}$ be the relative value of the equal-weighted portfolio. From \eqref{eqn:decomp}, we see that
\begin{equation} \label{eq:ewactiveret}
\begin{split}
&d \log V^{\mathrm{ew}}(t) = d \varphi_{\mathrm{ew}}(t) + \frac{1}{2} \delta_{\mathrm{ew}}(t) dt, \qquad 
V^{\mathrm{ew}}(0) = 1. %
\end{split}
\end{equation}
Thus, the relative value $V^{\mathrm{ew}}$ is governed entirely by market diversity $\varphi_{\mathrm{ew}}$ and dispersion $\delta_{\mathrm{ew}}$. As discussed in Remark \ref{rmk:leakage} below, \eqref{eq:ewactiveret} only holds approximately in practice because of constituent changes, dividends and discrete rebalancing. Nevertheless, the decomposition remains informative. Dispersion contributes a nonnegative drift, whereas changes in diversity can move relative value in either direction. Over short horizons, a sufficiently sharp decline in diversity can dominate the dispersion effect, causing the equal-weighted portfolio to underperform as market concentration increases. Motivated by this mechanism, \cite{TM21} uses a linear-regression signal to select either the equal-weighted portfolio or the market portfolio at each decision date. In this paper, we generalize the framework of \cite{TM21} to allow continuous adjustments between the equal-weighted and market portfolios. 

\begin{definition}[Equal-weighted tilt portfolio] \label{def:alpha.combination}
Given $\lambda$ in $\mathbf{L}_{\mathbb{F}}^{2}(T;\mathbb{R})$, we define the \emph{$\lambda$-tilt} towards the equal-weighted portfolio by
\begin{equation} \label{eqn:alpha.combination}
\pi^{\lambda}(t) 
\coloneqq \mu(t) + \lambda(t) \left(\pi^{\mathrm{ew}}(t)-\mu(t)\right).
\end{equation}
We denote its relative value by $V^{\lambda} \coloneqq V^{\pi^{\lambda}}$. From \eqref{eq:ewactiveret} and It\^{o}'s formula, we have
\begin{equation} \label{eq:ewmasterformula}
d \log V^{\lambda}(t) = \lambda(t) d\varphi_{\mathrm{ew}} (t) + \frac{1}{2}\lambda(t)\delta_{\mathrm{ew}}(t)dt + \frac{1}{2}\lambda(t)(1-\lambda(t))d\langle \varphi _{\mathrm{ew}} \rangle(t), \qquad V^{\lambda}(0) = 1.
\end{equation}
\end{definition}
The representation~\eqref{eqn:alpha.combination} can be interpreted as follows. Rather than representing merely a fraction of wealth allocated to the equal-weighted portfolio, $\lambda(t)$ determines the investor's exposure to the zero-net-investment long-short portfolio $\pi^{\mathrm{ew}}(t)-\mu(t)$. This long--short exposure reduces the portfolio's holdings of stocks with market weights above $1/n$ and increases its holdings of stocks with market weights below $1/n$. Thus, $\lambda(t)=0$ corresponds to the market portfolio and $\lambda(t)=1$ to the equal-weighted portfolio, while negative values reverse the long--short exposure and values greater than one amplify it. The return on $\pi^{\mathrm{ew}}(t)-\mu(t)$ is termed the \emph{equal-minus-value} return in \cite{HankeEtAl2019EqualWeightTilt}, where it is introduced as a factor for characterizing the diversification exposures of managed portfolios. Hanke and Quigley \cite{HankeQuigley2014Diversified} similarly combine the market portfolio with equal-stock and equal-sector portfolios, but use fixed portfolio-design coefficients. Related ``mirror-image portfolios'' relative to the market are studied in \cite{FKK05}; see also \cite[Section~8]{FK09}. In our framework, by contrast, the equal-weighted tilt $\lambda(t)$ is an endogenous, potentially state-dependent control, so both the magnitude and direction of the active exposure adjust as market conditions evolve.

\begin{remark}[Functionally generated portfolio and the master formula] \label{rmk:fgp}
The decomposition \eqref{eq:ewactiveret} can be generalized to a much wider class of portfolios. To any function $\varphi: \Delta_n \rightarrow \bR$ which is $C^2$ on an open neighborhood of $\Delta_n$ in $\bR^n$, we may define a portfolio $\pi$ by
\begin{equation} \label{eqn:fgp}
\pi_i(t) \coloneqq \mu_i(t) \left(1 + \frac{\partial \varphi}{\partial p_i}(\mu(t)) - \sum_{j = 1}^n \mu_j(t) \frac{\partial \varphi}{\partial p_j}(\mu(t))\right), \quad i = 1, \ldots, n.
\end{equation}
we call $\pi(t)$ the \emph{portfolio generated  by} $\varphi$. By \cite[Theorem 3.1.5]{FER02}, its log relative value admits the master formula
\begin{equation} \label{eqn:master.formula}
d \log V^{\pi}(t) = d \varphi(t) + d\Gamma_{\pi}(t), \quad d \Gamma_{\pi}(t) \coloneqq \frac{1}{2} \sum_{i, j = 1}^n \left( -\frac{\partial^2 \varphi}{\partial p_i \partial p_j} - \frac{\partial \varphi}{\partial p_i} \frac{\partial \varphi}{\partial p_j}  \right)(\mu(t)) d \langle \mu_i,\mu_j \rangle(t),
\end{equation}
where $\varphi(t)$ is a shorthand for $\varphi(\mu(t))$. The equal-weighted portfolio is generated in this sense by $\varphi_{\mathrm{ew}}$. In \eqref{eqn:master.formula}, the drift process $\Gamma_{\pi}$ is non-decreasing whenever $\varphi$ is exponentially concave, i.e., $e^{\varphi}$ is concave \cite{PW16}. Most results of this paper can be generalized if we replace the equal-weighted portfolio by a functionally generated portfolio. %
\end{remark}

\begin{remark}[Extension to several functionally generated portfolios] \label{rmk:multiple}
Our framework can be further generalized to manage an affine combination of $K$ given functionally generated portfolios and the market portfolio. Suppose we are given $K$ functionally generated portfolios $\pi^{(k)}$ with generating functions $\varphi_k$ and drift processes $\Gamma_k$, $k = 1, \ldots, K$. Given a $K$-dimensional process $\lambda \in \mathbf{L}_{\mathbb{F}}^{2}(T; \R^K)$, with a light abuse of notation, the relative return of the $\lambda$-tilt functionally generated portfolio, i.e.,
\begin{align*}
\pi^{\lambda}(t) = \mu(t) + \sum\limits_{k=1}^{K} \lambda_k(t)\left(\pi^{(k)}(t)-\mu(t)\right), 
\end{align*}
is then given by
\begin{align*}
d \log V^{\lambda}(t)= \sum\limits_{k=1}^{K} \lambda_k(t)\left(d\varphi_k(t)+ d\Gamma_k(t)\right) + \frac{1}{2}\sum_{k = 1}^K \lambda_k(t) (1 - \lambda_{k}(t))  d \langle \varphi_k\rangle(t) - \sum_{1 \leq k < \ell \leq K} \lambda_k(t)\lambda_\ell(t)d\langle\varphi_k,\varphi_\ell\rangle(t). 
\end{align*}
Given $\lambda$,  this log relative value of the $\lambda$-tilt functionally generated portfolio only depends on the $2K$-dimensional process $(\varphi_1, \ldots, \varphi_K, \Gamma_1, \ldots, \Gamma_K)$. We may think of this set-up as an active portfolio management problem in which we have carried out a dimensionality reduction to single out the market portfolio and $K$ tradable zero-net-investment long-short portfolios, where $\varphi_k$ and $\Gamma_k
$ are trading signals that we aim to capture. Such a dimensionality reduction is common in portfolio management, for example via fundamental or statistical factor models \cite{FF93, ROSS76}. One may possibly consider signals beyond functionals of the market weights. For example, in \cite{KIM23} the master formula was extended to incorporate auxiliary economic variables such as market-to-book ratios.
\end{remark}

\begin{remark}[Leakage] \label{rmk:leakage}
In practice, many market indices only track the top segment of the market. For example, the S\&P 500 corresponds roughly to the largest 500 stocks of the US market. Using the framework of \cite[Chapter 4]{FER02}, suppose now that $\mu$ is the capitalization-weighted portfolio of the largest $n$ stocks in an ambient market with $N > n$ stocks, and $\pi^{\mathrm{ew}}$ is the equal-weighted portfolio of the largest $n$ stocks. By construction, both portfolios have to rebalance whenever stocks enter or leave the restricted universe. When this leakage effect is included, the decomposition \eqref{eq:ewactiveret} needs to be modified. Let $\tilde{\mu} \in \Delta_N$ be the market weight vector of the ambient market. By \cite[Theorem 4.2.1]{FER02} (also see \cite[Example 4.3.2]{FER02}), under a mild nondegeneracy condition,  the relative value $V^{\mathrm{ew}}$ of $\pi^{\mathrm{ew}}$ with respect to $\mu$ satisfies
\begin{equation} \label{eqn:leakage}
d \log V^{\mathrm{ew}}(t) = d \varphi_{\mathrm{ew}}(t) + \frac{1}{2}\delta_{\mathrm{ew}}(t) dt - dL(t),
\end{equation}
where $\varphi_{\mathrm{ew}}(t)$ is the diversity of the top $n$ stocks, 
\begin{align*}
d L(t) \coloneqq \frac{1}{2} \left( \frac{1}{n} - \mu_{(n)}(t) \right) d \Lambda_{\log \tilde{\mu}_{(n)} - \log \tilde{\mu}_{(n+1)}}(t) + d \log (\tilde{\mu}_{(1)}(t)  + \cdots + \tilde{\mu}_{(n)}(t)),
\end{align*}
$X_{(1)} \geq X_{(2)} \geq \cdots \geq X_{(N)}$ denotes the (reverse) order statistics of a vector $X$, and $\Lambda_Y$ is the semimartingale local time of the process $Y$ at $0$. While \eqref{eqn:leakage} holds theoretically when the rank constraint is constantly enforced, actual indices only update their constituents periodically. For theoretical tractability, we develop the stochastic control problem using \eqref{eqn:alpha.combination} rather than the modified master formula \eqref{eqn:leakage}. See Section \ref{sec:data} for a further discussion.%

\end{remark}

\subsection{Stochastic diversity--dispersion model}\label{sec:sdd_model}
Motivated by the master formula \eqref{eq:ewmasterformula}, we propose a \emph{stochastic diversity--dispersion (SDD) model} for the joint dynamics of market diversity and dispersion. It can be viewed as a dimensionality reduction of the original market model in Section \ref{sec:SPT}, and circumvents the need to specify the joint dynamics of all stocks in the market.  While tailored to the equal-weighted portfolio, it can be extended to other functionally generated portfolios in view of Remark \ref{rmk:fgp}. Henceforth, we denote $\varphi = \varphi_{\mathrm{ew}}$ and $\delta = \delta_{\mathrm{ew}}$. 
An SDD  model consists of the following components:
\begin{itemize}
\item A strictly positive continuous semimartingale $\delta(t)$ in $\mathbf{L}^2_{\mathbb{F}}(T;\mathbb{R})$ that represents market dispersion. 
\item An It\^{o} process $\varphi(t)$ that represents market diversity and is given by the strong solution to the following stochastic differential equation (SDE)
\begin{equation}\label{eq:SDD}
d \varphi (t)=  b_{\varphi }(t) dt + \sigma_\varphi (t)dW_{\varphi }(t),\quad \varphi (0)=\varphi_0,
\end{equation}
where $b_{\varphi}$ and $\sigma_{\varphi}$ are progressively measurable processes with $\sigma_{\varphi} > 0$, $W_{\varphi}$ is a Brownian motion, and $\varphi_0 \in \R$ is the initial value.
\end{itemize}
We assume the following integrability condition holds:
\begin{align}\label{cond:integrability}
\E\left[\int_0^T \delta^2(t) dt \right]+
\E\left[\int_0^T\sigma_\varphi^4(t) dt \right] + 
\E\left[\int_0^T\left(\frac{2b_\varphi(t) + \delta(t)}{\sigma_\varphi^2(t)}\right)^4 dt \right] <\infty. 
\end{align}
Additional conditions will be stated in Section \ref{sec:stochastic_control} as needed.

This formulation deliberately leaves the joint dynamics of diversity and dispersion only partially specified. In particular, the diversity drift $b_{\varphi}$ and volatility $\sigma_\varphi$, and the dispersion process $\delta$, may be general adapted processes. We retain this generality for two reasons. First, the stochastic control problem can be solved at this level of generality, as shown in Section~\ref{sec:stochastic_control}. Second, the flexible specification is consistent with the model-free spirit of stochastic portfolio theory. 
Empirical implementation, however, requires explicit models for $\sigma_\varphi$, $b_{\varphi}$ and $\delta$, typically calibrated to historical data.

From~\eqref{eq:ewmasterformula}, under the SDD model \eqref{eq:SDD}, the log relative value of the equal-weighted tilt portfolio satisfies 
\begin{equation} \label{eq:SDD.log.value}
d \log V^{\lambda}(t) = \lambda(t) \left( b_{\varphi}(t) + \frac{1}{2} \delta(t) + \frac{1}{2}(1 - \lambda(t)) \sigma_{\varphi}^2(t)\right) dt + \lambda(t) \sigma_{\varphi} (t) d W_{\varphi}(t).
\end{equation}

As a first approximation, we may regard both diversity and (in log scale) dispersion as mean reverting processes. This leads to the following instance of the SDD model which serves as our main example throughout the paper.

\begin{example}[Mean-reverting model]\label{eg:logou}
The \emph{mean-reverting model} is defined by the SDE \begin{equation}
\begin{cases}\label{eq:logou}
&d \varphi (t)=  \kappa_{\varphi }(\overline{\varphi}-\varphi (t))dt +  \nu _{\varphi } \sqrt{\delta(t)}dW_{\varphi }(t),\qquad \varphi (0)=\varphi_0,\\
&d \log \delta (t) = \kappa_{\delta}(\log \overline{\delta}-\log\delta  (t))dt + \nu_{\delta} dW_{\delta}(t), \quad  \delta(0)=\delta_0.
\end{cases}
\end{equation} 
Here, $b_\varphi(t) = \kappa_\varphi(\bar\varphi - \varphi(t))$ and $\sigma_\varphi (t) = \nu_\varphi \sqrt{\delta(t)}$. $W_{\delta}$ is another Brownian motion with 
\[
d \langle W_{\varphi}, W_{\delta}\rangle(t) = \rho dt. 
\]
The parameters of the model are $\kappa_{\varphi}, \nu_{\varphi}, \kappa_{\delta}, \overline{\delta}, \nu_{\delta}, \delta_0 > 0$, $\overline{\varphi}, \varphi_0 \in \R$ and $\rho \in [-1, 1]$. Under this model, market diversity is mean-reverting with stochastic volatility, and the logarithm of market dispersion is an Ornstein--Uhlenbeck (OU) process. Such a stochastic volatility model was first considered in the context of interest rate modelling by Hull and White \cite{HW87}. It is fairly straightforward to see that \eqref{eq:logou} admits a strong solution given in closed-form by
\begin{align}
\begin{cases}                
\varphi (t) = \varphi _{0} e^{-\kappa_{\varphi }t} + \overline{\varphi }(1-e^{-\kappa_{\varphi }t}) + \nu _{\varphi } \int_{0}^{t} e^{-\kappa_{\varphi }(t-s)}\sqrt{ \delta(s)}\,dW_{\varphi }(s), \\
\delta(t) = \exp \left(  \log \delta_{0} e^{-\kappa_{\delta}t}+  \log \overline{\delta}  (1-e^{-\kappa_{\delta}t})+  \nu_\delta  \int_{0}^{t} e^{-\kappa_\delta(t-s)}\,dW_{\delta}(s)\right).
\end{cases}
\end{align} 
In this model, market diversity is mean reverting, and market dispersion affects the volatility of market diversity via the coefficient $\sqrt{\delta(t)}$. Thus, when dispersion is large market diversity becomes more volatile. This is a simple yet effective way to capture empirical relations between the two quantities \cite[Section 4]{CSW25}. The empirical fit of this model will be examined in Section \ref{sec:log.OU.calibration}.
\end{example}

The mean-reverting model is Markovian. By construction, it does not capture regime switching which might be happening due to the rise of artificial intelligence. In Section \ref{sec:trending}, we also consider a \emph{trending model} which is specified in discrete time and is intended to capture short-term dynamics. By using more sophisticated instances of the SDD model that incorporate data beyond capitalization weights and returns, as well as regime switching, we believe better performance can be achieved.

\section{Optimizing outperformance: A dynamic portfolio choice problem in SPT}\label{sec:stochastic_control}

In this section, based on the general SDD model~\eqref{eq:SDD}, we formulate and solve the dynamic allocation problem for the equal-weighted tilt $\lambda$. 
\subsection{Frictionless optimization}\label{ssec: frictionless market}
As a benchmark, we first consider the frictionless version of the portfolio optimization problem, where the agent chooses her optimal equal-weighted tilt $\lambda(t)\in  \mathbf{L}_{\mathbb{F}}^{2}(T;\mathbb{R})$ to maximize the terminal relative value $V^\lambda$ of her portfolio compared to the market portfolio, penalizing the corresponding active risk. In the absence of trading costs, using the master formula \eqref{eq:SDD.log.value}, this amounts to maximizing the criterion
\begin{align}\label{eq: frictionless target}
{J}_0(\lambda) 
&\coloneqq \E \left[ \log V^{\lambda}(T) - \frac{\gamma}{2} \langle \log V^{\lambda} \rangle (T) \right] \\
& = 
\E\left[\int_0^T \lambda(t)\left( b_\varphi(t) + \frac{1}{2}\delta(t)+\frac{1}{2}\sigma_\varphi^2(t) - \frac{1}{2} (1+\gamma)\sigma_\varphi^2(t)\lambda(t)\right)\ dt \right],
\notag
\end{align}
for some choice of risk aversion parameter $\gamma>0$. Given that the volatility process $\sigma_\varphi$ is strictly positive for every $t\in[0,T]$, the optimal equal-weighted tilt is readily determined by pointwise optimization as
\begin{align}\label{eq: frictionless alpha}
\lambda_0^*(t) \coloneqq \frac{b_\varphi(t) + \frac{1}{2}\delta(t)+\frac{1}{2}\sigma_\varphi^2(t)}{(1+\gamma )\sigma_\varphi^2(t)}  
= \frac{1}{1+\gamma}\left(\frac{2b_\varphi(t) +\delta(t)}{2\sigma_\varphi^2(t)} +\frac{1}{2}\right). 
\end{align}
Given the integrability condition~\eqref{cond:integrability}, it is straightforward to verify that $\lambda_0^*$ is admissible.

The optimal equal-weighted tilt, $\lambda_0^*$, is governed by the instantaneous expected log return of the equal-weighted portfolio relative to the market, $b_\varphi(t)+\tfrac{1}{2} \delta(t)$, scaled by its active variance $\sigma_\varphi^2(t)$. The additional $1/2$ term in~\eqref{eq: frictionless alpha} arises from the quadratic-variation term generated by mixing the equal-weighted and market portfolios. Thus, the target $\lambda_0^*$ adjusts both the magnitude and direction of the equal-weight exposure as market conditions evolve; it may become negative when concentration pressure is sufficiently strong and may exceed one when the expected diversification benefit is sufficiently large.
In particular,
\begin{align*}
\lambda_0^*(t)>0
\quad\Longleftrightarrow\quad
\frac{2b_\varphi(t) +\delta(t)}{\sigma_\varphi^2(t)} >-1.
\end{align*}
When this condition holds, the strategy takes a positive position in the long–short portfolio to capture the diversity premium. 
Moreover,  
\begin{align*}
\lambda_0^*(t)>1
\quad\Longleftrightarrow\quad
\frac{2b_\varphi(t) +\delta(t)}{\sigma_\varphi^2(t)}>2\gamma+1, 
\end{align*}
which means the strategy leverages the long–short portfolio to exploit the diversification benefit more aggressively. Conversely, 
\begin{align*}
\lambda_0^*(t)<0
\quad\Longleftrightarrow\quad
\frac{2b_\varphi(t) +\delta(t)}{\sigma_\varphi^2(t)}<-1.
\end{align*}
in which case the concentration effect dominates, and the optimal tilt becomes negative. 
Thus, unlike a static equal-weighted portfolio, $\lambda_0^*$ adjusts in either direction as market diversity evolves.

Notice that criterion $J_0$~\eqref{eq: frictionless target} yields the same optimal equal-weighted tilt as would be obtained if we instead optimize the following:
\begin{align}\label{target: frictionless adapted}
\tilde{J}_0(\lambda) 
&=\E\left[\int_0^T  \lambda(t)\left(\frac{ 2b_\varphi(t) + \delta(t)}{2\sigma_\varphi^2 (t)}+\frac{1}{2} - \frac{1}{2} (1+\gamma) \lambda(t)\right)\ dt \right].
\end{align}
We include the $\sigma_\varphi$-regularization in the integrand of the objective functional to ensure analytical tractability when incorporating trading costs; see Remark \ref{rmk:objective}.

\subsection{Frictional optimization}\label{ssec: frictional market}
The frictionless optimizer \(\lambda_0^*\) in \eqref{eq: frictionless alpha} is myopic: at each time \(t\), it depends only on the contemporaneous drift and variance of diversity and on dispersion, which together determine the local trade-off between expected relative log return and active risk. It therefore provides a natural frictionless target. Once implementation costs are taken into account, however, tracking this time-varying target closely can require frequent rebalancing, generating substantial turnover and reducing net performance.

Our objective in this section is to obtain a tractable policy rather than a structural model of transaction costs. Directly incorporating proportional costs at the constituent level would produce a nonsmooth, path-dependent control problem and would generally destroy the linear FBSDE structure used below. We therefore introduce a quadratic surrogate for implementation frictions. The surrogate is not intended to approximate the proportional cost function trade by trade. Instead, its coefficients will be calibrated in Section~\ref{sssec:backtest design} so that its cumulative penalty matches the active wealth drag generated by proportional transaction costs over the in-sample period. The resulting policy is then evaluated out of sample under proportional costs themselves.

We denote the investor's equal-weighted tilt by the process $\lambda$. To regulate this trading, we restrict $\lambda$ to be absolutely continuous with respect to time, i.e.,
\begin{align*}
d\lambda(t)=\dot{\lambda}(t)\,dt,
\end{align*}
where $\dot{\lambda}$ is the trading rate. Consistent with the portfolio-level
cost convention adopted in Section~\ref{sec:SPT},
We introduce two components of the quadratic surrogate. The first penalizes a nonzero equal-weighted tilt, whose underlying constituents must be continually rebalanced; its coefficient is $\Lambda_1>0$. The second penalizes changes in the tilt and therefore proxies for the additional turnover generated by dynamic allocation; its coefficient is $\Lambda_2>0$. We define the cumulative surrogate penalty by
\begin{equation}\label{eq:tc_penalty}
    d\mathit{TC}^\lambda(t)
    \coloneqq
    \frac{1}{2}
    \left(
        \Lambda_1\lambda(t)^2+\Lambda_2\dot{\lambda}(t)^2
    \right)
    d\langle\log V^{\mathrm{ew}}\rangle(t).
\end{equation}
We use $TC^\lambda$ to denote the surrogate penalty, not the transaction costs deducted from portfolio wealth. The parameters $\Lambda_1$ and $\Lambda_2$ are reduced-form penalty parameters rather than structural estimates of bid--ask spreads or commissions.
Recall that
\[
d\langle \log V^{\mathrm{ew}}\rangle(t)
=
\sigma_\varphi^2(t)\,dt.
\]
Thus, the specification in \eqref{eq:tc_penalty} scales both penalties
by the instantaneous active variance of the equal-weighted portfolio.
Applying the same normalization as in the frictionless problem then
yields the constant coefficients \(\Lambda_1\) and \(\Lambda_2\).

Fix an initial tilt $\lambda^0\in\mathbb{R}$. We consider the equal-weighted tilt process following
\[
\lambda(t)=\lambda^0+\int_0^t\dot{\lambda}(s)\,ds,
\qquad
\dot{\lambda}\in\mathbf{L}_{\mathbb{F}}^2(T;\mathbb{R}),
\]
and satisfying the integrability condition
\begin{align}\label{frictional cond: integrability}
\E\left[\langle\log V^\lambda\rangle(T)\right]
+\frac{\Lambda_1}{2}\E\left[\int_0^T\lambda(t)^2 d\langle\log V^{\mathrm{ew}}\rangle(t)\right]
+\frac{\Lambda_2}{2}\E\left[\int_0^T\dot{\lambda}(t)^2d\langle\log V^{\mathrm{ew}}\rangle(t)\right]<\infty.
\end{align}
We denote the resulting admissible class by $\mathcal{A}$.
Normalizing the local reward and cost rates by the instantaneous quadratic
variation of $\log V^{\mathrm{ew}}$ gives the frictional counterpart of $\tilde{J}_0$
\eqref{target: frictionless adapted}:
\begin{align}\label{eq: target}
J(\lambda)
&\coloneqq
\tilde{J}_0(\lambda)-\frac{1}{2} \E\left[\int_0^T\left(\Lambda_1\lambda(t)^2+\Lambda_2\dot{\lambda}(t)^2\right)dt\right]\notag\\
&=\frac{1+\gamma}{2}\E\left[\int_0^T\lambda_0^*(t)^2\,dt\right]
-\frac{1}{2}\E\left[\int_0^T\left((1+\gamma)\bigl(\lambda(t)-\lambda_0^*(t)\bigr)^2 +\Lambda_1\lambda(t)^2+\Lambda_2\dot{\lambda}(t)^2 \right) dt\right].
\end{align}
The second equality follows by completing the square in the frictionless criterion. Since the first term is independent of \(\lambda\), it may be omitted without affecting the optimizer. The remaining objective reveals three competing forces: the tracking term pulls \(\lambda\) toward the frictionless target \(\lambda_0^*\), the position penalty \(\Lambda_1\lambda^2\) shrinks the tilt toward the market portfolio \(\lambda=0\), and the adjustment penalty \(\Lambda_2\dot{\lambda}^2\) discourages rapid trading.

Under this quadratic surrogate, the optimal equal-weighted tilt is no longer myopic. The G\^ateaux differential of $J$ vanishes at the optimal trading rate $\dot\lambda\in \mathbf{L}_{\mathbb{F}}^{2}(T;\mathbb{R})$, i.e., for every arbitrary admissible $\dot\beta\in \mathbf{L}_{\mathbb{F}}^{2}(T;\mathbb{R})$, we have
\begin{align*}
0 &= \lim_{\varepsilon\to0}\frac{1}{\varepsilon}\left(J(\lambda+\varepsilon\beta)-J(\lambda)\right)    
\\&=
\E\left[\int_0^T \beta(t)\left({(1+\gamma)} \lambda_0^*(t) - (1+\gamma+\Lambda_1)\lambda(t)\right) - \Lambda_2 \dot\lambda(t)\dot\beta(t) \ dt\right]
\\&=
\E\left[\int_0^T \dot\beta(t)\left(\int_t^T \left((1+\gamma) \lambda_0^*({u}) - (1+\gamma+\Lambda_1)  \lambda(u)\right) du - \Lambda_2   \dot\lambda(t)\right) \ dt\right].
\end{align*}
Together with the frictionless optimal equal-weighted tilt $\lambda_0^*$ defined by~\eqref{eq: frictionless alpha}, the first-order condition is therefore
\begin{align}\label{foc: alpha}
\dot\lambda(t) 
&= \frac{1}{\Lambda_2 }\E_t\left[\int_t^T \left( (1+\gamma)\lambda_0^*({u}) - (1+\gamma+\Lambda_1) \lambda(u)\right) du\right] \notag\\
&= \frac{(1+\gamma+\Lambda_1)}{\Lambda_2 } \E_t\left[\int_t^T\left(  \frac{1+\gamma}{1+\gamma+\Lambda_1}\lambda_0^*(u) -  \lambda(u)\right)  du\right]. 
\end{align}
The optimal trading rate and the associated optimal equal-weighted tilt are jointly described via a forward-backward stochastic differential equation (FBSDE) system, whose existence and uniqueness are given by the following theorem, the proof of which is given in Appendix~\ref{app:thm proof}.

\begin{theorem}\label{thm: FBSDE}
Suppose that the integrability condition~\eqref{cond:integrability} holds and that
$\Lambda_1,\Lambda_2>0$. For any initial tilt $\lambda^0\in\mathbb{R}$, the forward-backward system
\begin{equation}\label{fbsde:constant}
\left\{
\begin{aligned}
\lambda(t)
&=\lambda^0+\int_0^t\dot{\lambda}(u)\,du,\\
\dot{\lambda}(t)
&=\frac{1+\gamma+\Lambda_1}{\Lambda_2}\E_t\left[\int_t^T\left(\frac{1+\gamma}{1+\gamma+\Lambda_1}\lambda_0^*(u)-\lambda(u)\right)du\right],
\end{aligned}
\right.
\end{equation}
admits a unique adapted solution $(\lambda^*,\dot{\lambda}^*)\in\mathbf{L}_{\mathbb{F}}^2(T;\mathbb{R})^2$. 
Define
\begin{align}\label{eq: G}
G(t)
\coloneqq
\cosh\left(\sqrt{\frac{1+\gamma+\Lambda_1}{\Lambda_2}}\,(T-t)\right),
\end{align}
and for $0\leq t<T$, the forward-looking aim process by
\begin{equation}\label{eq: trading aim}
\lambda^{\mathrm{aim}}(t)
\coloneqq
-\frac{G(t)}{G'(t)}\E_t\left[\int_t^T\frac{1+\gamma}{\Lambda_2}\frac{G(u)}{G(t)}\lambda_0^*(u)\,du\right].
\end{equation}
Then the optimal equal-weighted tilt admits the closed-form representation
\begin{equation}\label{eq: optimal control}
\lambda^*(t)
=\frac{G(t)}{G(0)}\lambda^0-\int_0^t\frac{G'(s)G(t)}{G^2(s)}\lambda^{\mathrm{aim}}(s)\,ds,
\end{equation}
and the optimal trading rate is
\begin{equation}\label{eq: optimal trading}
\dot{\lambda}^*(t)
=\sqrt{\frac{1+\gamma+\Lambda_1}{\Lambda_2}}
\tanh\left(\sqrt{\frac{1+\gamma+\Lambda_1}{\Lambda_2}}\,(T-t)\right)
\bigl(\lambda^{\mathrm{aim}}(t)-\lambda^*(t)\bigr), \qquad t < T.
\end{equation}
It can be verified that
$\lim_{t \uparrow T} \dot{\lambda}^*(t) = 0$.
\end{theorem}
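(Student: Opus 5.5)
Our approach is to prove existence through the explicit formula and uniqueness through strict concavity of $J$, then to derive \eqref{eq: optimal control} and \eqref{eq: optimal trading} by a variation-of-constants argument for the second-order linear ODE hidden in \eqref{fbsde:constant}. Write $\kappa^2 \coloneqq (1+\gamma+\Lambda_1)/\Lambda_2$ and $f(u)\coloneqq \frac{1+\gamma}{\Lambda_2}\lambda_0^*(u)$. By \eqref{cond:integrability} and \eqref{eq: frictionless alpha}, $\lambda_0^*\in\mathbf{L}^2_{\mathbb{F}}(T;\mathbb{R})$; in fact the fourth-moment assumption gives slightly more. The second line of \eqref{fbsde:constant} can then be recast as a linear BSDE. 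By martingale representation, $\dot\lambda(t)+\int_0^t(f(u)-\kappa^2\lambda(u))\,du$ is a square-integrable martingale $M(t)$ with $\dot\lambda(T)=0$. Thus formally $\ddot\lambda = \kappa^2\lambda - f + \dot M$, with boundary conditions $\lambda(0)=\lambda^0$ and $\dot\lambda(T)=0$.

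For uniqueness, suppose $(\lambda_1,\dot\lambda_1)$ and $(\lambda_2,\dot\lambda_2)$ both solve \eqref{fbsde:constant}, and set $\Delta=\lambda_1-\lambda_2$. Then $\Delta(0)=0$ and $\dot\Delta(t)=-\kappa^2\E_t[\int_t^T\Delta(u)\,du]$. Testing against $\dot\Delta$ in the same integration-by-parts identity used for the Gâteaux derivative gives
\[
\E\left[\int_0^T \bigl(\Lambda_2\dot\Delta^2+(1+\gamma+\Lambda_1)\Delta^2\bigr)\,dt\right]=0,
\]
so $\Delta\equiv0$. Equivalently, $J$ is strictly concave on $\mathcal{A}$ and the FOC characterizes its unique maximizer. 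The only care needed is Fubini/tower-property bookkeeping, which is valid since all processes lie in $\mathbf{L}^2$.

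For existence, we verify directly that the candidate given by \eqref{eq: trading aim}–\eqref{eq: optimal trading} solves \eqref{fbsde:constant}. Note that $G(t)=\cosh(\kappa(T-t))$ satisfies $G''=\kappa^2 G$, $G'(T)=0$, and $G>0$. The key step is a reduction of order. Substitute $\lambda(t)=G(t)y(t)$ and define $Y(t)\coloneqq G^2(t)\dot y(t)=G\dot\lambda-G'\lambda$. Differentiating formally,
\[
dY = G(\ddot\lambda - \kappa^2\lambda)\,dt = -G f\,dt + G\,dM .
\]
Since $Y(T)=G(T)\dot\lambda(T)-G'(T)\lambda(T)=0$, this yields $Y(t)=\E_t[\int_t^T G(u)f(u)\,du]$. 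By \eqref{eq: trading aim} this equals $-G'(t)\lambda^{\mathrm{aim}}(t)$. Hence $G\dot\lambda-G'\lambda=-G'\lambda^{\mathrm{aim}}$, that is,
\[
\dot\lambda=-\frac{G'}{G}\bigl(\lambda^{\mathrm{aim}}-\lambda\bigr),
\]
and $-G'/G=\kappa\tanh(\kappa(T-t))$ gives \eqref{eq: optimal trading}. Likewise $\dot y = -G'\lambda^{\mathrm{aim}}/G^2$ integrates to \eqref{eq: optimal control}.

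To make this rigorous, we run the argument backwards. Define $\lambda^*$ by \eqref{eq: optimal control}; it is adapted and absolutely continuous. Its derivative is \eqref{eq: optimal trading}, so $G\dot\lambda^*-G'\lambda^*=Y$ with $Y$ as above. Using $Y(t)+\int_0^t Gf\,du$ is a martingale, we then check that
\[
\dot\lambda^*(t)+\int_0^t\bigl(f-\kappa^2\lambda^*\bigr)\,du
\]
is a martingale vanishing appropriately at $T$. This is precisely the second line of \eqref{fbsde:constant}.

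The main obstacle is the behaviour near $t=T$, where $G'(t)\to0$ and $\lambda^{\mathrm{aim}}$ has the singular factor $1/G'$. We will avoid that division by working throughout with $Y(t)=-G'(t)\lambda^{\mathrm{aim}}(t)$. The integrand in \eqref{eq: optimal control} is $G(t)Y(s)/G^2(s)$, which is bounded in $L^2$, so $\lambda^*\in\mathbf{L}^2$. Moreover $\dot\lambda^*=(Y+G'\lambda^*)/G$, and $|Y(t)|\le \E_t[\int_t^T G|f|\,du]$ tends to $0$ as $t\uparrow T$. Together with $G'(T)=0$, this gives $\lim_{t\uparrow T}\dot\lambda^*(t)=0$ and $\dot\lambda^*\in\mathbf{L}^2$. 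Doob's inequality controls $\sup_t|Y(t)|$ and supplies the square-integrability needed for admissibility.
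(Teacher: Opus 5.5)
Your proposal is correct. For existence you follow the paper's route, which is direct verification that \eqref{eq: optimal control}--\eqref{eq: optimal trading} solve \eqref{fbsde:constant}. Where the paper only says this holds ``via direct calculation'', you supply the mechanism. The substitution $Y=G\dot\lambda-G'\lambda$ turns the backward equation into $dY=-Gf\,dt+G\,dM$ with $Y(T)=0$ and identifies $Y=-G'\lambda^{\mathrm{aim}}$. It then lets you check that $\dot\lambda^*+\int_0^\cdot(f-\kappa^2\lambda^*)\,du$ has differential $G^{-1}\,d\bigl(Y+\int_0^\cdot Gf\,du\bigr)$, so it is a martingale, while never dividing by the singular factor $1/G'$. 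It also proves the terminal limit $\dot\lambda^*(t)\to0$, which the paper states without proof.

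The genuine difference is uniqueness. The paper applies It\^o's formula to $(\dot\lambda^{(1)}-\dot\lambda^{(2)})^2$ and appeals to \cite{PP90} for the true-martingale property of the resulting stochastic integral. It then obtains $\E[\dot\Delta(t)^2]\le\kappa^2\E[\Delta(t)^2]$ and closes with Gronwall. You instead test the difference equation against $\dot\Delta$ and use only the tower property, Fubini and $\Delta(0)=0$ to get $\E\int_0^T(\Lambda_2\dot\Delta^2+(1+\gamma+\Lambda_1)\Delta^2)\,dt=0$. Your argument is shorter and needs no stochastic calculus. It also shows that uniqueness is exactly the coercivity of the second variation of $J$, the strict concavity the paper invokes only verbally. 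The paper's It\^o--Gronwall estimate is the standard BSDE stability bound; it additionally controls the martingale component, but at the cost of the extra true-martingale justification.

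One point to tighten is admissibility. Membership in $\mathcal{A}$ needs $\E\int_0^T\lambda^2\sigma_\varphi^2\,dt<\infty$, and likewise for $\dot\lambda$. Since \eqref{cond:integrability} only places $\sigma_\varphi^2$ in $L^2$, square-integrability of $\sup_t|Y(t)|$ is not enough. You should use Doob's inequality in $L^4$, fed by the fourth-moment part of \eqref{cond:integrability} for $\lambda_0^*$, and then Cauchy--Schwarz against $\E\int_0^T\sigma_\varphi^4\,dt$, as the paper does.
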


\begin{remark}
In our experiments in Section \ref{sec:empirics}, we initialize the tilt process by $\lambda^*(0) = \lambda^{\mathrm{aim}}(0)$. This choice is  motivated by \eqref{eq: optimal trading}, which states that $\lambda^*$ trades towards $\lambda^{\mathrm{aim}}$. Another reasonable choice is $\lambda^0 = 1$, so that $\pi^{\lambda}(0) = \mu(0)$. In principle, we may perform another optimization over $\lambda^0$ to initialize the tilt process.
\end{remark}

The investor, therefore, trades toward a weighted forecast of future frictionless tilts rather than responding only to the current value of $\lambda_0^*$. For fixed $t$, the deterministic factor $G(u)/G(t)$ decreases with $u$, assigning less weight to more distant forecasts. The market state variables $b_\varphi$, $\sigma_\varphi$, and $\delta$ affect the policy through $\lambda_0^*$ and its conditional forecasts, whereas $\gamma$, $\Lambda_1$, and $\Lambda_2$ determine the horizon-dependent adjustment speed in \eqref{eq: optimal trading}. This separation gives rise to the ``aiming in front of a moving target'' mechanism which is familiar from quadratic trading models such as \cite{GP13}, but is new in the context of stochastic portfolio theory.

\begin{remark} \label{rmk:objective}
The equivalence between the regularized and unregularized frictionless problems noted after \eqref{eq: frictionless alpha} does not generally extend to the frictional setting. In the frictionless problem, optimization is pointwise, so multiplying the instantaneous objective by the positive factor $\sigma_\varphi^2(t)$ leaves its maximizer unchanged. Once changes in $\lambda$ are penalized, however, portfolio choices across time become linked. The stochastic factor $\sigma_\varphi^2(t)$ then changes the relative weighting of the objective across dates and market states and may therefore alter the optimal tilt process. Under suitable moment and nondegeneracy conditions on $\sigma_\varphi^2(t)$, one may instead consider the unnormalized criterion
\begin{align*}
\widehat{J}(\lambda) 
\coloneqq \E\Bigg[\log V^\lambda(T)-\frac{\gamma}{2}\langle\log V^\lambda\rangle(T)
-\frac{\Lambda_1}{2}\int_0^T\lambda(t)^2d\langle\log V^{\mathrm{ew}}\rangle(t) -\frac{\Lambda_2}{2}\int_0^T\dot{\lambda}(t)^2d\langle\log V^{\mathrm{ew}}\rangle(t)\Bigg].
\end{align*}
The corresponding first-order condition has stochastic coefficients inherited
from $\sigma_\varphi$. Although existence and uniqueness can still be established under
suitable assumptions, the deterministic kernel in
\eqref{eq: optimal trading} is generally lost, and the optimizer need not admit a
closed-form representation. We therefore retain \eqref{eq: target} as the
tractable benchmark.

Two modelling choices should therefore be distinguished. First, the quadratic penalty is a tractable surrogate for the wealth effect of proportional implementation costs, rather than a pointwise approximation to the absolute-value cost function. Second, the variance normalization leading to (3.6) is used to obtain a deterministic adjustment kernel and an explicit policy. Neither step implies that the optimizer of (3.6) coincides with the optimizer of a portfolio problem containing proportional costs directly. The empirical link is established instead through the in-sample effect calibration in Section~4.3.3 and through out-of-sample evaluation under proportional costs.
\end{remark}

To implement the frictional optimal control given by Theorem \ref{thm: FBSDE}, we need to have conditional forecasts $\E_t[\lambda_0^*(u)]$, or equivalently forecasts of $\E_t[(2b_\varphi(u)+\delta(u))/\sigma_\varphi^2(u)]$. For the mean-reverting specification introduced in Example~\ref{eg:logou}, these forecasts admit an explicit integral representation.

\begin{theorem}\label{thm: optimal control logou}
Suppose that $(\varphi,\delta)$ follows the mean-reverting model
\eqref{eq:logou}. For $0\leq t\leq u\leq T$, define
\begin{align}
    m_\delta(t,u)
    &\coloneqq
    e^{-\kappa_\delta(u-t)}\log\delta(t)
    +\bigl(1-e^{-\kappa_\delta(u-t)}\bigr)\log\overline{\delta},
    \label{eq: logou conditional mean}\\
    q_\delta(t,u)
    &\coloneqq
    \frac{\nu_\delta^2}{2\kappa_\delta}
    \bigl(1-e^{-2\kappa_\delta(u-t)}\bigr),
    \label{eq: logou conditional variance}
\end{align}
and, for $0\leq t\leq s\leq u$,
\begin{equation}\label{eq: logou conditional covariance}
    c_\delta(t;s,u)
    \coloneqq
    \frac{\nu_\delta^2}{2\kappa_\delta}
    \left(
        e^{-\kappa_\delta(u-s)}
        -e^{-\kappa_\delta(u+s-2t)}
    \right).
\end{equation}
We have
\begin{align}
&\E_t\left[\delta(u)^{-1}\right]
= \exp\left(-m_\delta(t,u)+\frac{1}{2}q_\delta(t,u)\right), \label{eq: logou inverse moment}\\
&\E_t\left[\sqrt{\delta(s)}\,\delta(u)^{-1}\right]
= \exp\Bigg( \frac{1}{2}m_\delta(t,s)-m_\delta(t,u)+\frac{1}{2}\left[ \frac{1}{4}q_\delta(t,s)
+q_\delta(t,u)-c_\delta(t;s,u)\right]\Bigg). \label{eq: logou mixed moment}
\end{align}
Then, the conditional expectation of $\lambda_0^*$ is given by
\begin{align}\label{eq: ce log ou}
\E_t\left[\lambda_0^*(u)\right]
&=\frac{\kappa_\varphi\bigl(\overline{\varphi}-\varphi(t)\bigr) e^{-\kappa_\varphi(u-t)}}{(1+\gamma)\nu_\varphi^2}\E_t\left[\delta(u)^{-1}\right]\notag\\
&\qquad +\frac{\kappa_\varphi\nu_\delta\rho}{(1+\gamma)\nu_\varphi}
\int_t^u e^{-(\kappa_\varphi+\kappa_\delta)(u-s)}\E_t\left[\sqrt{\delta(s)}\,\delta(u)^{-1}\right] \, ds 
+\frac{1+\nu_\varphi^2}{2(1+\gamma)\nu_\varphi^2}.
\end{align}
\end{theorem}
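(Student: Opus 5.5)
The plan is to substitute the mean-reverting specification into the frictionless target \eqref{eq: frictionless alpha}, reduce everything to conditional moments of the log-Gaussian process $\delta$, and handle the one term where the diversity noise and the dispersion noise interact by a martingale-representation argument. With $b_\varphi(u)=\kappa_\varphi(\overline{\varphi}-\varphi(u))$ and $\sigma_\varphi^2(u)=\nu_\varphi^2\delta(u)$, we get
\[
\lambda_0^*(u)=\frac{\kappa_\varphi(\overline{\varphi}-\varphi(u))}{(1+\gamma)\nu_\varphi^2\,\delta(u)}+\frac{1+\nu_\varphi^2}{2(1+\gamma)\nu_\varphi^2}.
\]
This already produces the constant term in \eqref{eq: ce log ou}. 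It remains to compute $\E_t[(\overline{\varphi}-\varphi(u))\delta(u)^{-1}]$.

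\textbf{Step 1 (lognormal moments).} Conditionally on $\euscr{F}_t$, the closed-form solution shows that $(\log\delta(s),\log\delta(u))$ is jointly Gaussian. Its means are $m_\delta(t,s)$ and $m_\delta(t,u)$, its variances are $q_\delta(t,s)$ and $q_\delta(t,u)$, and its covariance is
\[
\nu_\delta^2\int_t^s e^{-\kappa_\delta(s-r)}e^{-\kappa_\delta(u-r)}\,dr=c_\delta(t;s,u).
\]
Then \eqref{eq: logou inverse moment} and \eqref{eq: logou mixed moment} follow from $\E[e^{X}]=e^{\E X+\frac12\Var X}$ applied to $X=-\log\delta(u)$ and to $X=\tfrac12\log\delta(s)-\log\delta(u)$. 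These computations also show that all moments of $\delta^{\pm1}$ are finite, which we use below.

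\textbf{Step 2 (decomposition).} From the explicit solution of \eqref{eq:logou},
\[
\overline{\varphi}-\varphi(u)=e^{-\kappa_\varphi(u-t)}(\overline{\varphi}-\varphi(t))-\nu_\varphi\int_t^u e^{-\kappa_\varphi(u-s)}\sqrt{\delta(s)}\,dW_\varphi(s).
\]
The first piece is $\euscr{F}_t$-measurable. Multiplied by $\delta(u)^{-1}$ and conditioned on $\euscr{F}_t$, it yields the first term of \eqref{eq: ce log ou} via Step 1.

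\textbf{Step 3 (cross term, main obstacle).} We must compute
\[
\E_t\Big[\delta(u)^{-1}\int_t^u e^{-\kappa_\varphi(u-s)}\sqrt{\delta(s)}\,dW_\varphi(s)\Big].
\]
For fixed $u$, introduce the martingale $M(s)\coloneqq\E_s[\delta(u)^{-1}]=\exp(-m_\delta(s,u)+\tfrac12 q_\delta(s,u))$ on $s\in[t,u]$. Apply Itô's formula to $M$, using $d\log\delta(s)=\kappa_\delta(\log\overline{\delta}-\log\delta(s))\,ds+\nu_\delta\,dW_\delta(s)$; the drift must cancel since $M$ is a martingale. This gives
\[
dM(s)=-\nu_\delta e^{-\kappa_\delta(u-s)}M(s)\,dW_\delta(s).
\]
Since $M(u)=\delta(u)^{-1}$, the product rule (Itô isometry for the covariation) yields
\[
\E_t\Big[M(u)\int_t^u H\,dW_\varphi\Big]=\E_t\int_t^u H(s)\,d\langle W_\varphi,M\rangle(s)=-\rho\nu_\delta\int_t^u e^{-\kappa_\delta(u-s)}\E_t[H(s)M(s)]\,ds,
\]
with $H(s)=e^{-\kappa_\varphi(u-s)}\sqrt{\delta(s)}$. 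By the tower property, $\E_t[\sqrt{\delta(s)}M(s)]=\E_t[\sqrt{\delta(s)}\delta(u)^{-1}]$. Multiplying by $-\nu_\varphi\kappa_\varphi/((1+\gamma)\nu_\varphi^2)$ produces the integral term in \eqref{eq: ce log ou}, with the factor $e^{-(\kappa_\varphi+\kappa_\delta)(u-s)}$ and the positive sign.

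To justify Step 3, we need the local martingale $\int H\,dW_\varphi\cdot M-\int HM\,d\langle W_\varphi,M\rangle$ to be a true martingale. This follows from the finiteness of all lognormal moments in Step 1, together with Cauchy--Schwarz and the Burkholder--Davis--Gundy inequality. Fubini is used to exchange $\E_t$ with the $ds$-integral. I expect this integrability bookkeeping to be the only delicate point.
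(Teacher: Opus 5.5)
Your proposal is correct and follows essentially the same route as the paper. Both arguments use the conditional Gaussianity of $(\log\delta(s),\log\delta(u))$ for the two moment formulas, the decomposition of $\overline{\varphi}-\varphi(u)$, and the martingale $M(r)=\E_r[\delta(u)^{-1}]$ with $dM=-\nu_\delta e^{-\kappa_\delta(u-r)}M\,dW_\delta$, together with the It\^o product rule and the tower property for the cross term. One small slip: in your last paragraph the compensator should be $\int H\,d\langle W_\varphi,M\rangle$, not $\int HM\,d\langle W_\varphi,M\rangle$, consistent with your displayed identity. Otherwise your BDG/Cauchy--Schwarz justification of the true-martingale property is more explicit than the paper's one-line appeal to lognormal moments.
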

Equation \eqref{eq: ce log ou} follows from the joint conditional Gaussianity of
$\log\delta(s)$ and $\log\delta(u)$, together with an integration-by-parts
argument for the component of $W_\varphi$ correlated with $W_\delta$. Substitution
into \eqref{eq: optimal trading} gives an explicit representation of the optimal
trading rate under the mean-reverting model. The proof is in Appendix~\ref{app:log ou proof}.  
This separation is useful in practice: the
portfolio rule is determined by the frictional optimization problem, whereas the
forecasting model for diversity and dispersion can be chosen to match the investment horizon and market behaviours.

\section{Numerical and Empirical Analysis}
\label{sec:empirics}
In this section, we conduct a series of numerical and empirical experiments to illustrate the main features of the SDD model and the performance of the associated stochastic control framework. Section~\ref{sec:data} describes the data we use and the construction of the empirical inputs. In Section~\ref{sec:log.OU.calibration}, we calibrate the mean-reverting model to S\&P 500 data and assess the extent to which it reproduces several stylized properties of market diversity and dispersion. We then use Theorem~\ref{thm: optimal control logou} to study the behaviour of the optimal tilt process \(\lambda^*\) under the calibrated model. Finally, in Section~\ref{sec:backtest}, we carry out a real-data backtest of the optimal control \eqref{eq: optimal control} under the mean-reverting model as well as a data-driven trending SDD model described in Section~\ref{sec:trending}.

\subsection{Data and empirical setup} \label{sec:data}
For the empirical analysis in this section, we take the investment universe to be the time-varying constituents of the S\&P 500. The S\&P 500 is a capitalization-weighted index comprising roughly the largest 500 publicly traded companies within the US, representing about 80\% of the country's equity market capitalization. We obtain constituent-level market capitalization and return data for the period 1977--2024 from the CRSP database,\footnote{See \url{https://www.crsp.org/products/research-products/crsp-us-stock-databases}.} and process the data using the publicly available code in \cite[Notebook 6]{RUF23}.

We discretize \([0,T]\) (or the relevant estimation window) on an equally spaced grid \(\{t_k\}_{k=0}^{N_T}\) with step size \(\Delta t\). Throughout the empirical analysis, we set \(\Delta t=1/12\), corresponding to monthly intervals, and allow portfolio rebalancing only at the grid points. For any process \(Y\), we write
$$ 
\Delta Y(t_k)\coloneqq Y(t_k)-Y(t_{k-1}) 
$$
for its monthly backwards difference. Following Remark~\ref{rmk:dispersion}, we define
$$ 
\mathrm{RD}(t_k)\coloneqq \mathrm{RD}_{\mathrm{ew}}(t_k;\Delta t) 
$$
as the realized dispersion of the equal-weighted portfolio over the month \([t_{k-1},t_k]\), computed from daily returns. All signals dated \(t_k\) use information available through \(t_k\) and determine the portfolio held over \((t_k,t_{k+1}]\). %
Each monthly interval contains approximately \(252\Delta t=21\) trading days.

To implement the continuous-time results of Sections~\ref{sec:prelim}--\ref{sec:stochastic_control} in the empirical setting, we must account for three departures from the idealized model: (i) portfolios are rebalanced no more frequently than monthly; (ii) instantaneous dispersion is not directly observable, so monthly realized dispersion provides only a proxy; and (iii) changes in the investment universe generate a drag on relative performance through the leakage effect described in Remark~\ref{rmk:leakage}. We incorporate these effects through the reduced-form approximation of dispersion as
$$ 
\delta(t_k)\approx \widehat{\delta}(t_k)\coloneqq c_\delta\,\mathrm{RD}(t_k). 
$$
This approximation is motivated by the empirical observation that, to first order, the local time term in \eqref{eqn:leakage} is, when integrated, approximately proportional to cumulative dispersion \(\int_0^t\delta(s)\,ds\); see \cite[Chapters 4--5]{FER02}.
We estimate \(c_\delta\) by ordinary least squares from the regression
$$ 
\Delta\log V^{\mathrm{ew}}(t_k) - \Delta\varphi_{\mathrm{ew}}(t_k) = \frac{c_\delta}{2}\, \mathrm{RD}_{\mathrm{ew}}(t_k)\Delta t + \epsilon_k. 
$$
Using monthly observations from 1977 to 1994, we obtain \(c_\delta=0.21\), with an adjusted \(R^2\) of \(0.95\).\footnote{See \cite[Section 6.2.1]{CSW25} for an analogous regression.}

\subsection{Calibration and simulation of the mean-reverting SDD model}
\label{sec:log.OU.calibration}

\subsubsection{Calibration}

We calibrate the mean-reverting model to monthly S\&P 500
diversity and realized dispersion data from 1977 through 2004, and
assess its fit over the subsequent sample from 2005 through 2024. Applying
It\^o's formula to the dispersion process in \eqref{eq:logou}, followed
by an Euler--Maruyama discretization with step size \(\Delta t\), gives
\begin{equation}
\label{eqn:log.OU.discrete}
\begin{cases}
\displaystyle
\Delta\varphi(t_{k+1})
=
\kappa_\varphi
\bigl(\overline{\varphi}-\varphi(t_k)\bigr)\Delta t
+
\nu_\varphi\sqrt{\delta(t_k)}
\,\Delta W_\varphi(t_{k+1}),
\\[1.2ex]
\displaystyle
\Delta\delta(t_{k+1})
=
\left[
\kappa_\delta
\bigl(\log\overline{\delta}-\log\delta(t_k)\bigr)
+
\frac12\nu_\delta^2
\right]
\delta(t_k)\Delta t
+
\nu_\delta\delta(t_k)
\,\Delta W_\delta(t_{k+1}).
\end{cases}
\end{equation}
Here,
\[
\Delta W_i(t_{k+1})
:=
W_i(t_{k+1})-W_i(t_k),
\qquad i\in\{\varphi,\delta\},
\]
and the Brownian increments satisfy
\[
\E_{t_k}
\left[
\Delta W_\varphi(t_{k+1})
\Delta W_\delta(t_{k+1})
\right]
=
\rho\,\Delta t.
\]
Conditional on the current state, the increment vector in
\eqref{eqn:log.OU.discrete} is therefore jointly Gaussian, yielding an
explicit conditional likelihood for the discretely observed state
sequence.

As described in Section~\ref{sec:data}, we treat
\(\widehat{\delta}(t_k)\) as the observed counterpart of
\(\delta(t_k)\) and maximize the resulting conditional likelihood of
the sequence $(\varphi(t_k),\widehat{\delta}(t_k))$, conditional on its initial value.

\begin{table}[ht!]
    \centering
    \begin{tabular}{|l|c|c|}
    \hline
    \textbf{Parameter} & \textbf{Value} & \textbf{95\% CI} \\ \hline
    $\kappa_{\varphi}^{\mathrm{MLE}}$ & 0.318 & (0.089, 0.547) \\ \hline
    $\overline{\varphi}^{\mathrm{MLE}}$ & -6.934 & (-6.993, -6.875) \\ \hline
    $\nu_{\varphi}^{\mathrm{MLE}}$ & 0.371 & (0.343, 0.400) \\ \hline
    $\kappa_{\delta}^{\mathrm{MLE}}$ & 3.179 & (1.308, 5.049) \\ \hline
    $\overline{\delta}^{\mathrm{MLE}}$ & 0.011 & (0.006, 0.016) \\ \hline
    $\nu_{\delta}^{\mathrm{MLE}}$ & 2.369 & (2.190, 2.549) \\ \hline
    $\rho^{\mathrm{MLE}}$ & -0.20 & (-0.304, -0.095) \\ \hline
    \end{tabular}
    \caption{Point estimates (via maximum likelihood) and Wald confidence intervals (based on the observed Fisher information) for the discretized mean-reverting model \eqref{eqn:log.OU.discrete} fitted by maximum likelihood using monthly diversity and dispersion between 1977--2004.}

    \label{table:sdd_params}
\end{table}

To initialize the numerical maximization of the likelihood, we fit
separate standard \(\mathrm{AR}(1)\) models to \(\varphi\) and
\(\log\delta\), and use the resulting estimates as starting values for
\((\kappa_\varphi,\overline{\varphi},\nu_\varphi)\) and
\((\kappa_\delta,\overline{\delta},\nu_\delta)\), respectively. We
initialize \(\rho\) using the sample correlation between
\(\Delta\varphi\) and \(\Delta\log\mathrm{RD}\) over the estimation
period. The resulting maximum likelihood estimates are reported in
Table~\ref{table:sdd_params}. The estimated long-run diversity level
\(\overline{\varphi}^{\mathrm{MLE}}\) is close to the sample mean of
market diversity over the estimation period; see
Figure~\ref{fig:sdd_simulation}. The 95\% confidence intervals for both
mean-reversion coefficients,
\(\kappa_\varphi^{\mathrm{MLE}}\) and
\(\kappa_\delta^{\mathrm{MLE}}\), exclude zero, providing evidence of
mean reversion in market diversity and log dispersion. The estimate
\(\rho^{\mathrm{MLE}}\) is also significantly negative, indicating a
negative contemporaneous correlation between innovations in diversity
and log dispersion.

\subsubsection{Model fit and stylized properties}

We next assess how well the calibrated mean-reverting model can reproduce several stylized properties of market diversity and dispersion. Figure~\ref{fig:sdd_simulation} displays several simulated paths---generated at a monthly frequency over 48 years---which we compare against S\&P 500 diversity and dispersion.\footnote{All simulated paths are initialized at the January 1977 value of market diversity and the corresponding adjusted realized-dispersion proxy.} When computing estimators of the calibrated SDD model below, we use the blue sample path, which we denote by $(\varphi^\mathrm{SDD}, \delta^\mathrm{SDD})$ to distinguish it from actual diversity and dispersion. We obtain qualitatively similar results using other simulated paths.

Since $\log\delta$ follows an Ornstein–Uhlenbeck process, its stationary distribution is Gaussian with mean $\log\overline{\delta}$ and variance ${\nu_\delta^2}/{2\kappa_\delta}$.
Consequently, the stationary distribution of $\delta$, is $\operatorname{Lognormal} \left( \log\overline{\delta}, \nu_\delta^2/ 2\kappa_\delta \right)$.

\begin{figure}[ht!]
\centering
\includegraphics[scale=0.66]{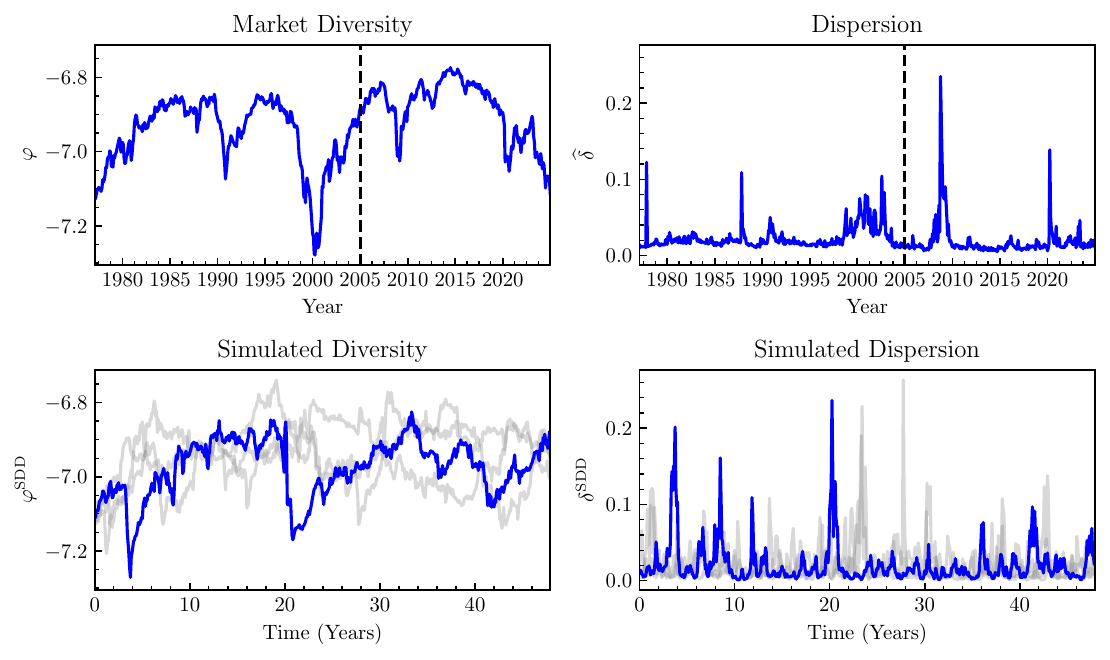}
\caption{Top row: market diversity $\varphi$ and the adjusted dispersion proxy $\widehat{\delta}=c_\delta\mathrm{RD}$ for the S\&P 500 over the full sample period, 1977--2024, both sampled at a monthly frequency. The vertical dashed line in each panel marks the end of the in-sample period used to calibrate the mean-reverting model. Bottom row: monthly sample paths of diversity and dispersion simulated from the calibrated model. The path used in the subsequent distributional comparisons is highlighted in blue, while the remaining paths are shown in grey.}
\label{fig:sdd_simulation}
\end{figure}
We begin by inspecting the empirical distribution of the monthly increments $\Delta \varphi$ of market diversity, plotted on the left panel of Figure~\ref{fig:sdd_kde}. The empirical distribution of the simulated path aligns closely with those arising from both the in- and out-of-sample real data, between which there is very little difference. We observe that the increments are well approximated by a Gaussian distribution.

The model provides a less accurate description of dispersion. Figures~\ref{fig:sdd_simulation} and~\ref{fig:sdd_kde} show that both simulated dynamics and the model-implied stationary distribution meaningfully differ from the empirical data. Although the simulated process exhibits persistent periods of elevated dispersion (so called, volatility clustering), it fluctuates more rapidly at short horizons and does not reproduce the rare extreme spikes observed in the data. The complex multi-scale and nonstationary behaviour of empirical dispersion cannot be faithfully captured by an Ornstein--Uhlenbeck process.%

The left panel of Figure~\ref{fig:sdd_higher_moments} examines the upper tail of dispersion on a logarithmic scale. The in-sample distribution of log dispersion is broadly consistent with a Gaussian approximation and with the calibrated stationary law, whereas the out-of-sample distribution has a substantially heavier right tail.
In particular, the fitted standard deviation of log dispersion increases from $0.71$ in sample to $1.46$ out of sample. This difference appears to be driven in large part by extreme observations during the 2008--09 global financial crisis and the 2020 COVID-19 market crash.

The right panel of Figure~\ref{fig:sdd_higher_moments} compares the sample autocorrelation functions of realized dispersion with the stationary autocorrelation function implied by the calibrated model.
For the time lag $h\geq 0$, the latter is
\begin{align*}
\operatorname{Corr}
\bigl(\delta(t),\delta(t-h)\bigr)
= \frac{\exp\left( \frac{\nu_\delta^2}{2\kappa_\delta} e^{-\kappa_\delta h}\right)-1}
{\exp\left(\frac{\nu_\delta^2}{2\kappa_\delta}\right)-1}.
\end{align*}
Thus, for a lag of $\ell$ months, one sets
$h=\ell\Delta t$. The model-implied autocorrelation decays more
rapidly than its in-sample empirical counterpart, but is reasonably
close to the out-of-sample autocorrelation over most of the reported
lags.

\begin{figure}[ht!]
    \centering
    \includegraphics[scale=0.66]
    {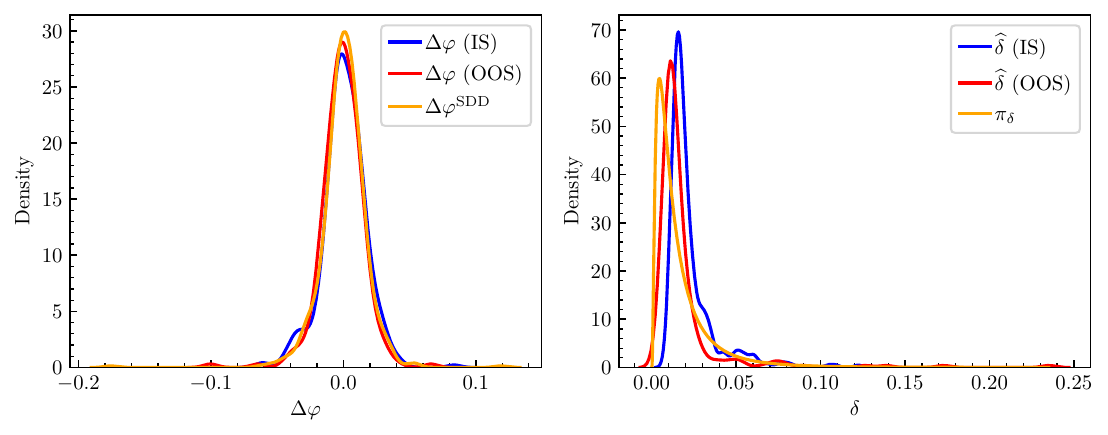}
    \caption{Left panel: Kernel density estimates of the monthly
    increments in market diversity, \(\Delta\varphi\), over the
    in-sample period 1977--2004 (blue), the out-of-sample period
    2005--2024 (red), and the sample path highlighted in
    Figure~\ref{fig:sdd_simulation} (yellow). Right panel: Kernel
    density estimates of the adjusted realized-dispersion proxy
    \(\widehat{\delta}=c_\delta\mathrm{RD}\) over the in-sample and
    out-of-sample periods, together with the stationary density
    \(\pi_\delta\) implied by the calibrated mean-reverting model.}
    \label{fig:sdd_kde}
\end{figure}

\begin{figure}[ht!]
\centering
\includegraphics[scale=0.66]{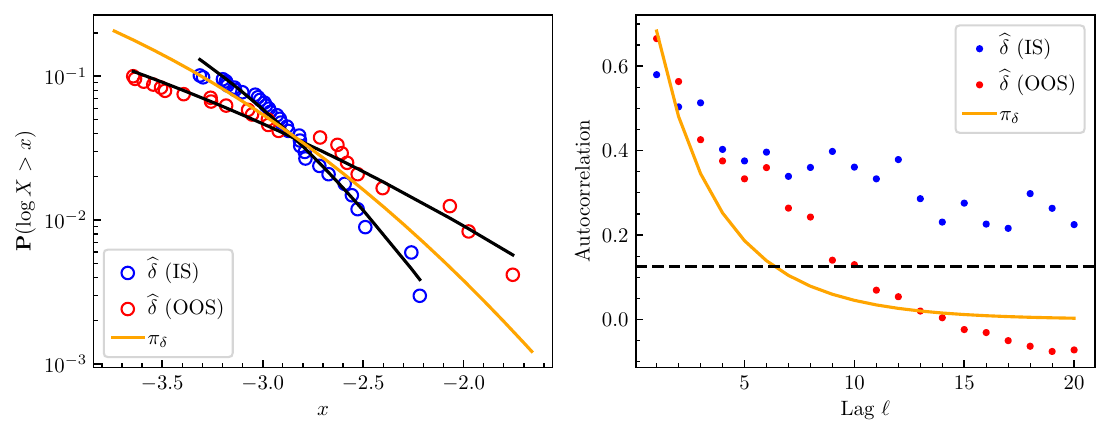}
\caption{Left panel: Empirical upper-tail probabilities of $\log\widehat{\delta}$ over the in-sample and out-of-sample periods, together with the stationary tail implied by the calibrated mean-reverting model. The black curves show Gaussian tail fits to the empirical distributions, with fitted in-sample and out-of-sample standard deviations $\sigma_{\mathrm{IS}}=0.71$ and $\sigma_{\mathrm{OOS}}=1.46$, respectively. Right panel: Sample autocorrelation functions of dispersion over the in-sample and out-of-sample periods and for the highlighted simulated path, together with the stationary autocorrelation function implied by the calibrated model.}
\label{fig:sdd_higher_moments}
\end{figure}

Finally, we evaluate the joint dynamics of market diversity and dispersion under the mean-reverting model. Motivated by \cite[Section 4]{CSW25}, Figure~\ref{fig:sdd_lowess} plots monthly changes in market diversity against corresponding values of monthly realized dispersion, fitted with a locally-weighted scatterplot smoothing (LOWESS) regression. We observe that under the mean-reverting model, consistent with the real data, large changes in diversity exhibit a positive nonlinear association with the level of dispersion.

\begin{figure}[htb!]
\centering
\includegraphics[scale=0.66]{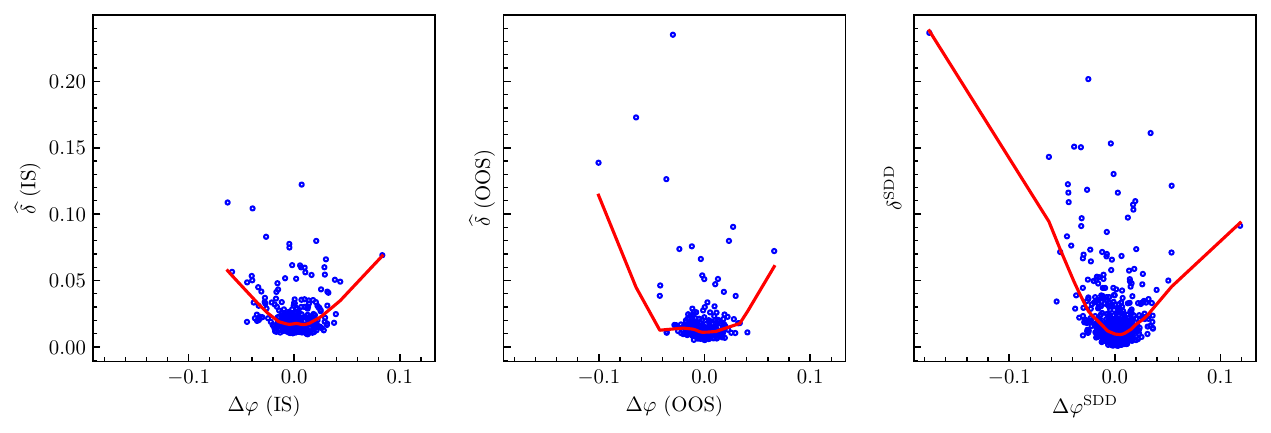}
\caption{Scatterplots of dispersion against contemporaneous monthly increments in market diversity, with LOWESS fits. The left and center panels use the adjusted realized-dispersion proxy over the in-sample and out-of-sample periods, respectively; the right panel uses the highlighted sample path simulated from the calibrated mean-reverting model.}
\label{fig:sdd_lowess}
\end{figure}

\subsubsection{Optimal-control simulations}

We next use the calibrated mean reverting model to simulate the
frictionless optimizer \(\lambda_0^*\)\footnote{For $\lambda_0^*$, the adjustment-rate penalty is evaluated using
monthly finite differences.} in
\eqref{eq: frictionless alpha} and the frictional optimizer
\(\lambda^*\) in \eqref{eq: optimal control}, both derived in
Section~\ref{sec:stochastic_control}. We compare the corresponding
portfolios with the equal-weighted portfolio using two cumulative
performance measures, evaluated at the monthly grid points. The first
is the unpenalized log relative value \(\log V^\lambda\). The second is
the penalized log relative value
\[
\log V^\lambda-\mathit{TC}^\lambda,
\]
where \(\mathit{TC}^\lambda\) is the quadratic cost surrogate penalty defined in
\eqref{eq:tc_penalty}
\footnote{This quantity evaluates the policy within the quadratic surrogate model. It should not be interpreted as portfolio wealth net of proportional transaction costs. In Section~\ref{sec:backtest}, proportional costs are deducted directly from discretely rebalanced portfolio wealth.} . 
Both quantities are computed pathwise using the
master formula \eqref{eq:ewmasterformula} with respect to our simulated path $(\varphi^\mathrm{SDD},\delta^\mathrm{SDD})$. The penalized quantity used
in this simulation exercise should not be confused with the net
relative value in Section~\ref{sec:backtest}, where we implement
discrete rebalancing and deduct proportional transaction costs
directly from portfolio wealth.

To fully specify the simulation, we must fix the risk-aversion
parameter \(\gamma\) as well as the cost-penalty coefficients
\(\Lambda_1\) and \(\Lambda_2\). Consistent with the risk target used
in the historical backtest, we set \(\gamma=16.81\) so that the frictionless target portfolio has annualized active volatility of $5\%$ in the in-sample period; also see \eqref{eq:risk_target} below.
We set $\Lambda_1=1.17$ and $\Lambda_2=0.54$, taken from the calibrated values (Figure~\ref{fig:calibrating_costs}) for the mean-reverting model in the real-data backtest in Section \ref{sec:backtest}.

\begin{figure}[ht!]
    \centering
    \includegraphics[scale=0.66]{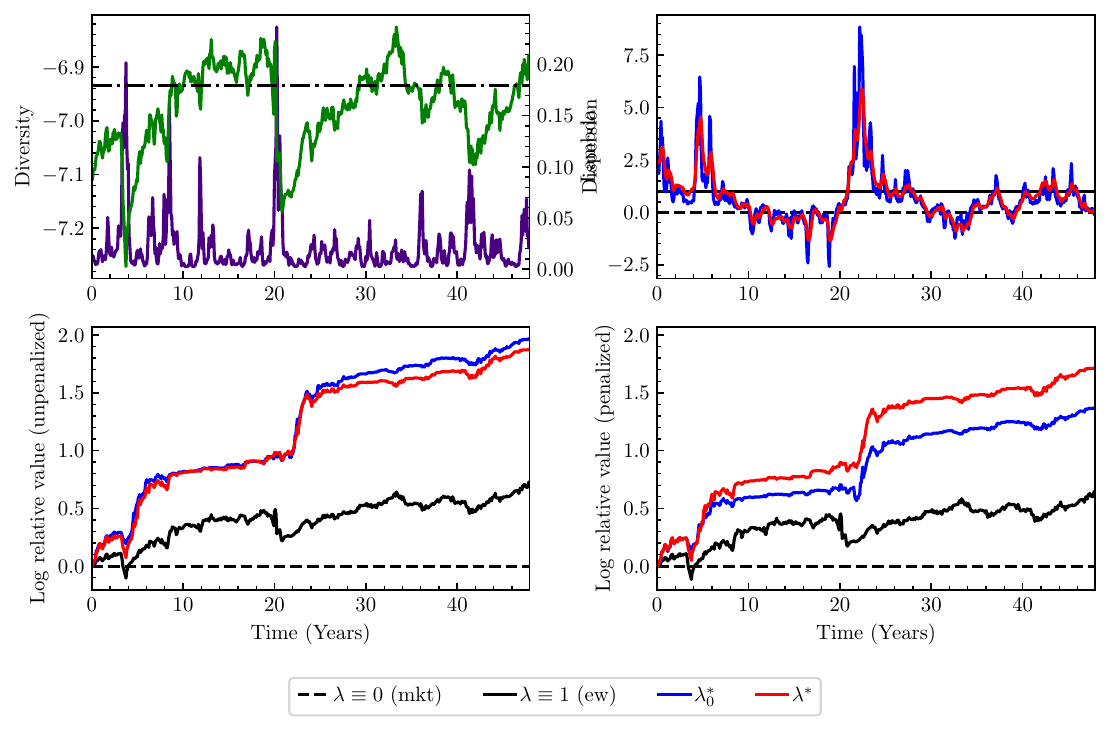}
    \caption{Top row: A simulated path of market diversity and
    dispersion under the calibrated mean-reverting SDD model (left), and the
    resulting frictionless target \(\lambda_0^*\) and frictional
    optimal tilt \(\lambda^*\) (right). Bottom row: The cumulative
    unpenalized log relative values (left) and penalized log relative
    values (right) of the market, equal-weighted, frictionless-optimal,
    and frictional-optimal portfolios. The frictional strategy uses
    \(\Lambda_1=1.17\) and \(\Lambda_2=0.54\).}
    \label{fig:sdd_backtest}
\end{figure}

On the simulated path $(\varphi^\mathrm{SDD}, \delta^\mathrm{SDD})$ shown in Figure~\ref{fig:sdd_backtest}, all three
portfolios, $\pi^{\mathrm{ew}}$, $\pi^{\lambda_0^*}$, and
$\pi^{\lambda^*}$, outperform the market in terminal relative-value terms. 
The two optimal tilt portfolios also exhibit smoother relative-value
trajectories than the equal-weighted portfolio. The equal-weighted portfolio
benefits primarily from increases in market diversity and therefore
experiences prolonged relative drawdowns when market concentration rises, as
illustrated by the episode near year 20. By contrast, the optimal tilt
portfolios can take positive or negative exposure to diversity, allowing them
to respond to deviations on either side of its long-run mean. They also adjust their exposure dynamically in response to the conditional drift and
instantaneous variance of diversity, with $\sigma_\varphi^2 \propto \delta$. These adjustments help mitigate periods of underperformance relative to the market and produce less volatile log relative-value trajectories.

Before quadratic implementation-cost penalties are taken into account, the
frictionless portfolio $\pi^{\lambda_0^*}$ attains a higher terminal log
relative value than the frictional portfolio $\pi^{\lambda^*}$ on this
simulated path. This difference reflects the greater responsiveness of the
frictionless optimizer to changes in its target. Such responsiveness,
however, also results in a larger implementation-cost penalty. Once the surrogate penalties incurred by both strategies are deducted, the frictional portfolio outperforms the frictionless portfolio on the same path.

To assess whether this performance difference extends beyond the particular
simulated path, we simulate 10,000 additional paths of the calibrated
mean-reverting model (over the same time horizon which is 48 years). For each path, we compute the frictionless and
frictional optimizers, $\lambda_0^*$ and $\lambda^*$, and record the difference
in their terminal unpenalized log relative values,
\[
D^{\mathrm{unpen}}
\coloneqq
\log V^{\lambda^*}(T)
-
\log V^{\lambda_0^*}(T),
\]
as well as the corresponding difference after deducting the quadratic
implementation-cost penalty incurred by each strategy,
\[
D^{\mathrm{pen}}
\coloneqq
\left(
\log V^{\lambda^*}(T)
-
\mathit{TC}^{\lambda^*}(T)
\right)
-
\left(
\log V^{\lambda_0^*}(T)
-
\mathit{TC}^{\lambda_0^*}(T)
\right).
\]
Thus, a negative value of $D^{\mathrm{unpen}}$ indicates that the frictionless
portfolio performs better before implementation costs, whereas a positive
value of $D^{\mathrm{pen}}$ indicates that the frictional portfolio performs
better after the respective penalties are deducted.

Figure~\ref{fig:sim_backtest_hist} presents histograms of
$D^{\mathrm{unpen}}$ and $D^{\mathrm{pen}}$, shown in blue and red,
respectively. The unpenalized distribution is centered slightly below zero, whereas
the penalized distribution is centered significantly above zero. Thus, on average, the
frictionless optimizer attains a higher terminal log relative value before
implementation-cost penalties are imposed, whereas the frictional optimizer
attains a higher penalized terminal value once the penalty incurred by each
strategy is deducted.

\begin{figure}[ht!]
\centering
\includegraphics[scale = 0.8]{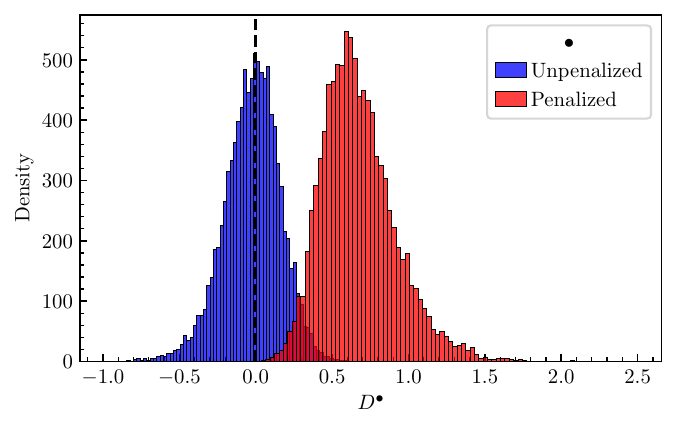}
\caption{Histograms of the terminal performance differences between the frictional and frictionless optimal portfolios across \(10{,}000\) simulated paths of the calibrated mean-reverting SDD model. The blue histogram shows the unpenalized difference \(D^{\mathrm{unpen}}\), while the red histogram shows the surrogate-penalized difference \(D^{\mathrm{pen}}\).}
\label{fig:sim_backtest_hist}
\end{figure}

\subsection{Portfolio implementation and historical backtesting}
\label{sec:backtest}

\subsubsection{Discrete-time implementation}

We now describe the discrete-time implementation of the optimal control
derived in Section~\ref{sec:stochastic_control}. Recall that the
continuous-time solution depends on the terminal horizon $T$ through the
function \(G\). In the empirical implementation, we use a rolling \emph{relative}
control horizon. At each decision time \(t_k\), we set
\[
T_k=t_k+M\Delta t,
\qquad
M\in\mathbb{N},
\]
and define
\[
a
\coloneqq
\sqrt{\frac{1+\gamma+\Lambda_1}{\Lambda_2}},
\qquad
G_k(s)
\coloneqq
\cosh\!\left(a(T_k-s)\right),
\qquad
s\in[t_k,T_k].
\]
At each decision time \(t_k\), we use information available through \(t_k\) to compute the next-period tilt \(\lambda^*(t_{k+1})\). 
Applying \eqref{eq: optimal control} over the interval
\([t_k,t_{k+1}]\) gives
\begin{align*}
\lambda^*(t_{k+1})
&=
\frac{G_k(t_{k+1})}{G_k(t_k)}
\lambda^*(t_k)
-
G_k(t_{k+1})
\int_{t_k}^{t_{k+1}}
\frac{G_k'(s)}{G_k(s)^2}
\lambda^{\mathrm{aim}}(s)\,ds.
\end{align*}
The value \(\lambda^*(t_{k+1})\) is computed using information available at \(t_k\) and applied only to returns realized over \((t_k,t_{k+1}]\). 
Approximating the aim process by its value at the beginning of the
interval yields
\begin{align}
\lambda^*(t_{k+1})
&\approx
\frac{G_k(t_{k+1})}{G_k(t_k)}
\lambda^*(t_k)
+
\left(
1-\frac{G_k(t_{k+1})}{G_k(t_k)}
\right)
\lambda^{\mathrm{aim}}(t_k)
\nonumber\\
&=
\alpha_M\lambda^*(t_k)
+
(1-\alpha_M)\lambda^{\mathrm{aim}}(t_k),
\label{eq:discrete_control}
\end{align}
where
\[
\alpha_M
=
\frac{\cosh\!\left(a(M-1)\Delta t\right)}
     {\cosh\!\left(aM\Delta t\right)}.
\]
Thus, as in the discrete-time framework of
G\^arleanu and Pedersen~\cite{GP13}, the new position is a weighted
average of the inherited position and a forward-looking aim.

For the rolling horizon beginning at \(t_k\), define
\[
K_k(u)
\coloneqq
-\frac{1+\gamma+\Lambda_1}{\Lambda_2}
\frac{G_k(u)}{G_k'(t_k)},
\qquad
u\in[t_k,T_k].
\]
Given the terminal horizon $T_k$, the aim process in \eqref{eq: trading aim} can then be written as
\[
\lambda^{\mathrm{aim}}(t_k)
=
\frac{1+\gamma}{1+\gamma+\Lambda_1}
\int_{t_k}^{T_k}
K_k(u)\,
\E_{t_k}\!\left[\lambda_0^*(u)\right]\,du.
\]
We approximate this integral on the monthly grid by
\begin{align}
\lambda^{\mathrm{aim}}(t_k)
\approx
\frac{1+\gamma}{1+\gamma+\Lambda_1}
\sum_{\ell=0}^{M-1}
\E_{t_k}\!\left[
\lambda_0^*(t_{k+\ell})
\right]
\int_{t_{k+\ell}}^{t_{k+\ell+1}}
K_k(u)\,du.
\label{eq:discrete_aim}
\end{align}
Thus, evaluating \(\lambda^{\mathrm{aim}}(t_k)\) requires the
conditional forecasts
\[
\E_{t_k}\!\left[
\lambda_0^*(t_{k+\ell})
\right],
\qquad
\ell=1,\ldots,M-1,
\]
whereas the term corresponding to \(\ell=0\) is simply the current
frictionless target \(\lambda_0^*(t_k)\). The hyperparameter \(M\)
therefore determines the length \(M\Delta t\) of the rolling forecast
horizon.

When \(M=1\), no forward forecast is required. Since
\(\int_{t_k}^{t_{k+1}}K_k(u)\,du=1\), the aim reduces to
\begin{equation}
\lambda^{\mathrm{aim}}(t_k)
=
\frac{1+\gamma}{1+\gamma+\Lambda_1}
\lambda_0^*(t_k).
\label{eq:M1_aim}
\end{equation}

When backtesting the mean-reverting model below, we set $M=12$, corresponding to a one-year rolling horizon. We find this reasonably balances the tradeoff between reducing transaction costs while minimizing the noise introduced from the forecasting error of $\lambda^\mathrm{aim}$.

\subsubsection{A data-driven trending specification}
\label{sec:trending}

As a data-driven alternative to the mean reverting specification, we also
consider a model designed to capture short-run persistence in changes
in market diversity. Let \(t_k^D\) denote the daily grid and
\(\Delta^D\varphi(t_{k+1}^D)\) the corresponding daily increment in
diversity. We specify
\begin{equation}
\label{eq:trending}
\begin{cases}
\Delta^D\varphi(t_{k+1}^D)
=
m(t_k^D)
+
\epsilon_\varphi(t_{k+1}^D),
\\[0.5ex]
\epsilon_\varphi(t_{k+1}^D)
=
\sqrt{v(t_k^D)}\,z(t_{k+1}^D),
\\[0.5ex]
m(t_{k+1}^D)
=
\alpha_\varphi
\Delta^D\varphi(t_{k+1}^D)
+
(1-\alpha_\varphi)m(t_k^D),
\\[0.5ex]
v(t_{k+1}^D)
=
\omega
+
\alpha_v\epsilon_\varphi(t_{k+1}^D)^2
+
\beta_v v(t_k^D).
\end{cases}
\end{equation}
Here \(z(t_{k+1}^D)\) is a standardized white-noise innovation. Thus,
\(m\) is an exponentially smoothed estimate of the local trend in
diversity, whereas \(v\) follows a GARCH(1,1) recursion.

This specification is motivated by empirical evidence of short-run
persistence in changes in market diversity. Audrino et
al.~\cite{AFF07} document positive autocorrelation in
monthly changes in S\&P 500 diversity over 1960--2001. At the same
time, the existence of a fixed long-run equilibrium level for market
diversity is not clear; see \cite[Section~3.2]{CSW25}. The long swings
in diversity visible in Figure~\ref{fig:diversity} also suggest that,
to the extent that diversity is mean reverting, the relevant
adjustment horizon may span several years. The trending specification
is therefore intended as a flexible empirical alternative to the
long-run mean-reversion dynamics of the mean-reverting model.

Using the recursions in \eqref{eq:trending}, the conditional mean and
variance of the next monthly change in diversity are
\[
b_\varphi(t_k)
=
\frac{
\E_{t_k}[\Delta\varphi(t_{k+1})]
}{
\Delta t
},
\qquad
\sigma_\varphi^2(t_k)
=
\frac{
\Var_{t_k}[\Delta\varphi(t_{k+1})]
}{
\Delta t
}.
\]
Since the conditional expectation of the smoothed daily trend remains
equal to its current value over the forecast horizon,
\[
b_\varphi(t_k)
=
\frac{m(t_k)}{\Delta t^D}.
\]
For a month consisting of 21 trading days, the conditional variance is
\[
\sigma_\varphi^2(t_k)
=
\frac{1}{\Delta t}
\sum_{i=1}^{21}
\left(
1+\alpha_\varphi(21-i)
\right)^2
\left[
\overline v
+
(\alpha_v+\beta_v)^{i-1}
\left(
v(t_k)-\overline v
\right)
\right],
\]
where $\overline v = \omega / (1-\alpha_v-\beta_v)$ is the stationary mean of the conditional daily variance, provided
\(\alpha_v+\beta_v<1\).

For the trending specification, we use only the current frictionless
target rather than forecasts over multiple future months; equivalently,
we set \(M=1\) in \eqref{eq:discrete_aim}. Hence,
\[
\lambda^{\mathrm{aim}}(t_k)
=
\frac{1+\gamma}{1+\gamma+\Lambda_1}
\lambda_0^*(t_k).
\]
No dynamic model for future dispersion is, therefore, required. At each
decision time, we use the contemporaneous adjusted dispersion proxy $\widehat{\delta}(t_k) = c_\delta\mathrm{RD}(t_k)$ as discussed in Section \ref{sec:data}.

\subsubsection{Backtest design}\label{sssec:backtest design}

We evaluate all strategies over the period 1995--2024, using the
preceding 18 years, 1977--1994, as the in-sample period for parameter
estimation and calibration. To improve estimation stability, when calibrating SDD model parameters in-sample, we drop extreme returns, defined in \cite[Notebook 6]{RUF23} as daily returns above 100\% or below -50\%. Each portfolio is initialized with wealth
\(z_0=\$1{,}000\) and rebalanced monthly, with each target portfolio held over the subsequent month. 
We assume that stocks are held throughout the entire month unless they delist from their respective stock exchange (\textit{not} the S\&P500 index) prior to the end of the month.
Following \cite{TM21}, we impose proportional transaction costs of $\mathrm{tc}=15$ basis points (bps) on the value of all trades by adapting the methodology of Ruf \& Xie (\cite{RX20}) to allow for short-selling. Following the notation in \cite{RX20}, given a portfolio $\pi$, let $\psi \coloneqq Z^\pi \pi$ denote its cash holdings, let $t_k-$ denote the time just before rebalancing at time $t_k$, and define $c \coloneqq Z^\pi(t_k) / Z^\pi(t_k-)$ and $c_i\coloneqq \pi_i(t_k-) / \pi_i(t_k) \mathbf{1}_{\pi_i(t_k)\neq 0}$. It follows that $c$ satisfies Equation 2.2 in (\cite{RX20}), which, after dividing out by $Z^\pi(t_k-)$, we can rewrite as $F(c)=0$, where \[
F(c) \coloneqq \sum_{i=1}^n  \pi_i(t_k)(c-c_i) + \mathrm{tc} \sum_{i=1}^n |\pi_i(t_k)(c-c_i)| - \widehat{D}(t_k-)
.\] Here $\widehat{D}(t_k-) \coloneqq D(t_k) / V(t_k-)$, where the numerator represents the net dividends received/paid during the period $[t_{k-1},t_k)$. As $c \to \pm \infty$, $F(c)\to \pm \infty$, and the first-order conditions imply $F$ is monotone increasing, and hence admits a unique root, provided $\mathrm{tc}|\pi_1(t_k)|_1<1$. This condition always holds for reasonable portfolios and proportional transaction costs.

We compare the market and equal-weighted portfolios with the strategies
generated by the mean reverting and trending SDD specifications. For each model
\(m\in\{\mathrm{mr},\mathrm{trend}\}\), we consider the frictionless
optimizer \(\lambda_0^{*,m}\), the frictional optimizer
\(\lambda^{*,m}\), and the clipped strategy
\[
\overline{\lambda}^{*,m}
\coloneqq
\bigl(\lambda^{*,m}\vee0\bigr)\wedge1.
\]
Since \(0\leq\overline{\lambda}^{*,m}\leq1\), the corresponding
portfolio is a convex combination of the market and equal-weighted
portfolios and is therefore long-only. We emphasize that this clipped
strategy is included as an implementable benchmark; it is not the
solution of a dynamically constrained version of our stochastic
control problem.

Because the backtest uses a shorter estimation window than the calibration
exercise in Section~\ref{sec:log.OU.calibration}, we re-estimate the
parameters of the mean-reverting SDD model using data from 1977--1994 and
hold them fixed throughout the out-of-sample period. Although the state
variables and conditional forecasts are updated as new observations arrive,
the model parameters are \emph{not} re-estimated out of sample. The resulting
parameter estimates are reported in Table~\ref{table:backtest_params}.
Relative to those obtained from the longer 1977--2004 calibration sample,
the estimates of $\nu_\varphi$, $\kappa_\delta$, and $\nu_\delta$ differ
materially. These differences may partly reflect the inclusion of the
elevated dispersion surrounding the dot-com episode in the longer
calibration sample, which is absent from the shorter backtest-estimation
window.

For the trending specification, we similarly estimate the GARCH
parameters from in-sample daily changes in diversity. As we see in Table~\ref{table:backtest_params}, the persistence $\alpha_v+\beta_v=0.93$ is less than one, corresponding to stationary volatility. We choose the
smoothing parameter \(\alpha_\varphi\) so that
\[
h_\varphi
\coloneqq
-\frac{\log 2}{\log(1-\alpha_\varphi)}
=
21
\]
trading days, corresponding approximately to one month and consistent
with the persistence documented in \cite{AFF07}. Increasing the
half-life to several months produces qualitatively similar backtest
results, consistent with the persistence horizons commonly observed
for time-series momentum strategies \cite{MOP12}.

To facilitate comparison across forecasting specifications, we
risk-normalize the two strategies in sample. Specifically, for each
model \(m\), we choose the risk-aversion parameter \(\gamma_m\) so
that the frictionless target portfolio has annualized active volatility
of 5\%:
\begin{equation}
\sqrt{
\frac{1}{\Delta t}
\widehat{\Var}_{\mathrm{IS}}
\left(
R^{\lambda_0^{*,m},\mathrm{net}}
-
R^{\mu,\mathrm{net}}
\right)
}
=
5\%.
\label{eq:risk_target}
\end{equation}
Here and below, ``gross'' refers to portfolio performance before
proportional transaction costs, whereas ``net'' refers to performance
after deducting those costs. The model-specific values of \(\gamma_m\)
therefore normalize the two forecasting specifications to a common
active-risk budget.

\begin{table}[ht!]
    \centering
    \begin{tabular}{|l|c|c|c|c|c|c|c|c|c|c|c|c|}
    \hline
    \textbf{SDD Model} & \multicolumn{8}{c|}{Mean-reverting} & \multicolumn{4}{c|}{Trending}\\
    \hline
    \textbf{Parameter} & $\kappa_\varphi$ & $\overline{\varphi}$ & $\nu_\varphi$ & $\kappa_\delta$ & $\overline{\delta}$ & $\nu_\delta$ & $\rho$ & $\gamma_\mathrm{mr}$  & $\omega$ & $\alpha_v$ & $\beta_v$  & $\gamma_\mathrm{trend}$ \\ \hline 
    \textbf{Value} & 0.43 & $-6.93$ & 0.30 & 5.52 & 0.018 & 0.99 & $-0.30$ & 19.91  & $6.36 \cdot 10^{-7}$ & 0.14 & 0.79 & 30.11 \\ \hline 
    \end{tabular} 
    \caption{Model parameters for the mean-reverting and trending SDD strategies estimated in-sample between 1977--1994.}
    \label{table:backtest_params}
\end{table}

\paragraph{Calibration of the quadratic surrogate to proportional costs.}
We now map the proportional-cost implementation into the quadratic surrogate used to construct the policy. The calibration target is the cumulative active wealth drag generated by proportional transaction costs, rather than the instantaneous cost of an individual trade. This distinction is essential: proportional costs are linear in absolute turnover, whereas the surrogate is quadratic in the level and adjustment speed of the tilt. We choose the surrogate parameters using only the in-sample period and subsequently evaluate the resulting policy by deducting proportional costs directly from out-of-sample wealth.

For \(\bullet\in\{\mathrm{gross},\mathrm{net}\}\), let
\[
V^{\pi,\bullet}(t)
=
\frac{Z^{\pi,\bullet}(t)}
     {Z^{\mu,\bullet}(t)}
\]
denote the relative value of portfolio \(\pi\) with respect to the
correspondingly implemented market portfolio. We define the cumulative
active cost of \(\pi\) by
\begin{equation}
C^\pi(t)
\coloneqq
\log V^{\pi,\mathrm{gross}}(t)
-
\log V^{\pi,\mathrm{net}}(t),
\label{eq:active_cost}
\end{equation}
and write \(C^\lambda\coloneqq C^{\pi^\lambda}\) for a
\(\lambda\)-tilt portfolio.
Thus, $C^\pi$ is the empirical calibration target generated by the proportional-cost backtest, whereas $TC^\lambda$ is the analytically tractable surrogate appearing in the control problem.

Because \(C^\pi\) is defined in terms of relative rather than absolute
portfolio values, it need not coincide with the logarithm of the
cumulative dollar transaction costs incurred by \(\pi\). In
particular, the market benchmark itself incurs some turnover in the
empirical implementation, for example when the S\&P 500 constituent
set changes. Consequently, increments of \(C^\pi\) need not be
nonnegative in general, although they are positive throughout our
monthly backtests.

Recall that the quadratic cost proxy associated with a tilt process is
\[
\mathit{TC}^{\lambda}(t)
=
\frac12
\int_0^t
\left(
\Lambda_1\lambda(s)^2
+
\Lambda_2\dot\lambda(s)^2
\right)
\sigma_\varphi^2(s)\,ds.
\]
For the equal-weighted portfolio, \(\lambda\equiv1\) and
\(\dot\lambda\equiv0\), and hence
\[
\mathit{TC}^{\mathrm{ew}}(t)
=
\frac{\Lambda_1}{2}
\int_0^t
\sigma_\varphi^2(s)\,ds.
\]
For each forecasting specification \(m\), we therefore calibrate
\(\Lambda_1^m\) by matching the terminal in-sample quadratic cost proxy
to the observed active cost of the equal-weighted portfolio:
\begin{equation}
\widehat{\Lambda}_1^m
=
\frac{
2C^{\mathrm{ew}}(T_{\mathrm{IS}})
}{
\displaystyle
\int_0^{T_{\mathrm{IS}}}
\sigma_{\varphi,m}^2(s)\,ds
},
\label{eq:lambda1_calibration}
\end{equation}
where $[0, T_{\mathrm{IS}}]$ is the in-sample period (1977--1994). 
The equal-weighted portfolio provides a natural anchor for $\Lambda_1$: because $\lambda\equiv1$ and $\dot\lambda\equiv0$, its surrogate penalty isolates the maintenance component. Equation~\ref{eq:lambda1_calibration} therefore matches the terminal in-sample wealth drag of the equal-weighted portfolio under proportional costs. 
Under the mean reverting specification with
\(\sigma_{\varphi,\mathrm{mr}}^2
=\nu_\varphi^2\delta\),
\eqref{eq:lambda1_calibration} becomes
\[
\widehat{\Lambda}_1^{\mathrm{mr}}
=
\frac{
2C^{\mathrm{ew}}(T_{\mathrm{IS}})
}{
\displaystyle
\nu_\varphi^2
\int_0^{T_{\mathrm{IS}}}
\delta(s)\,ds
}.
\]
Having fixed \(\widehat{\Lambda}_1^m\), we calibrate
\(\Lambda_2^m\) using the corresponding frictional optimal strategy.
More precisely, we choose \(\widehat{\Lambda}_2^m\) to minimize the
absolute terminal in-sample discrepancy between the quadratic cost
proxy and the realized active cost:
\begin{equation}
\widehat{\Lambda}_2^m
\in
\argmin_{\Lambda_2>0}
\left|
\mathit{TC}^{\lambda^{*,m}(\Lambda_2)}
(T_{\mathrm{IS}})
-
C^{\lambda^{*,m}(\Lambda_2)}
(T_{\mathrm{IS}})
\right|,
\label{eq:lambda2_calibration}
\end{equation}
with all other parameters held fixed. 
Conditional on $\widehat\Lambda_1^m$, the dynamically adjusted strategy identifies $\Lambda_2^m$, since its implementation generates both constituent-level rebalancing and changes in the tilt. Equations~\ref{eq:lambda1_calibration} - \ref{eq:lambda2_calibration} therefore define a two-stage effect calibration from the proportional-cost model to the quadratic surrogate. They should not be interpreted as structural estimation of the proportional cost function. All parameters are estimated using 1977--1994 data and then held fixed throughout the out-of-sample period.

\begin{figure}[ht!]
    \centering
    \includegraphics[scale=0.66]{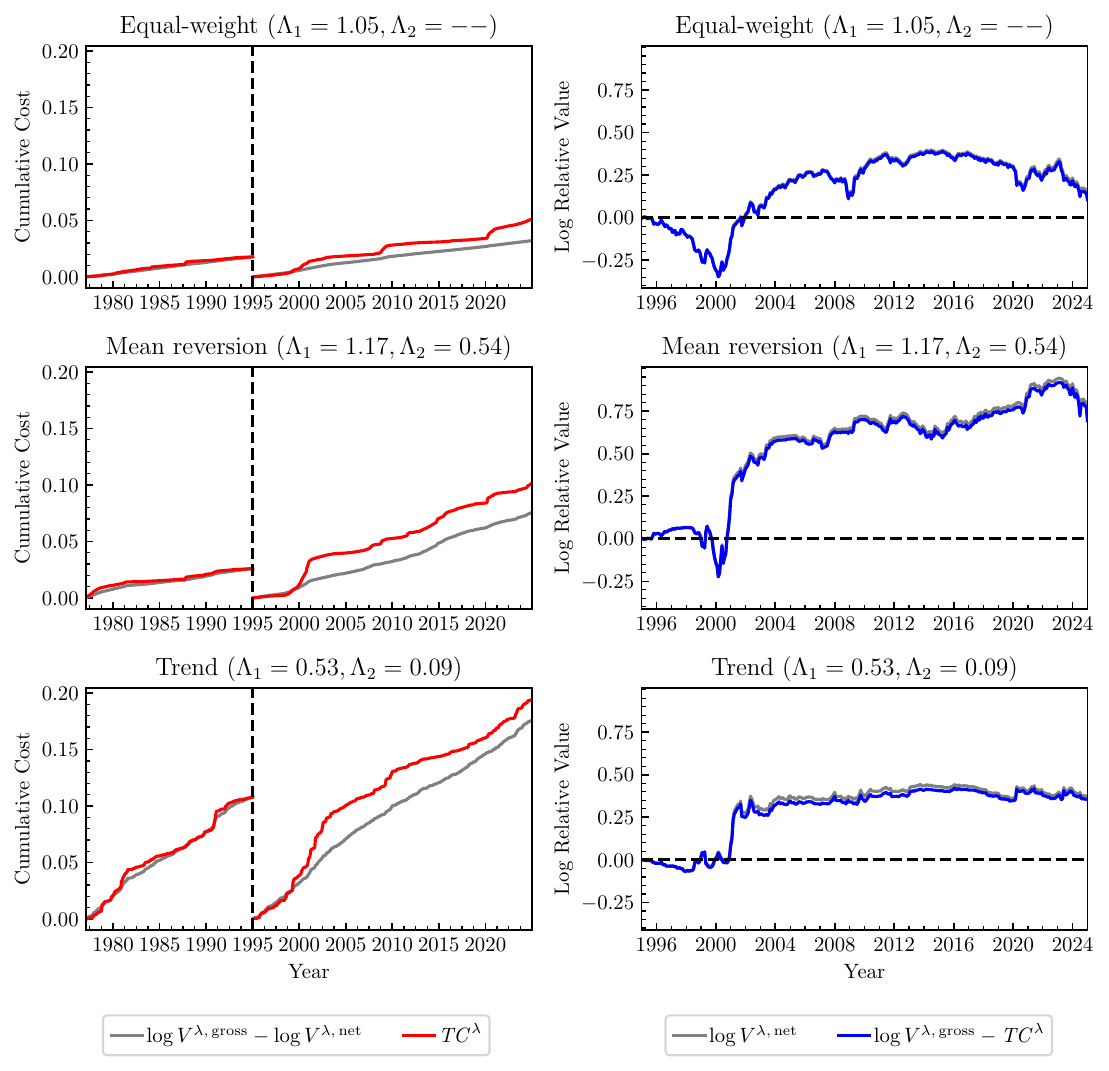}
    \caption{Left column: Cumulative active costs
    \(C^\lambda
    =
    \log V^{\lambda,\mathrm{gross}}
    -
    \log V^{\lambda,\mathrm{net}}\)
    generated by monthly rebalancing under 15 bps proportional
    transaction costs (gray), together with the calibrated quadratic
    cost proxy \(\mathit{TC}^\lambda\) (red). The vertical dashed
    lines mark the end of the in-sample period in December 1994.
    Right column: Actual net log relative value
    \(\log V^{\lambda,\mathrm{net}}\) (gray) and its quadratic-penalty
    approximation
    \(\log V^{\lambda,\mathrm{gross}}
    -\mathit{TC}^\lambda\) (blue).}
    \label{fig:calibrating_costs}
\end{figure}

Figure~\ref{fig:calibrating_costs}  assesses the effect calibration beyond the terminal in-sample matching conditions. The vertical dashed line separates the calibration and evaluation periods. Although the surrogate parameters are estimated using data only through December 1994, the quadratic penalty broadly tracks the subsequent accumulation of active wealth drag under proportional costs. The fit is not exact, nor is exact equality required for the policy-design approximation. The relevant question is whether the surrogate captures the economically important magnitude and timing of implementation frictions sufficiently well to generate a useful smoothed policy.

\subsubsection{Backtest results}

The quadratic surrogate affects the strategy through its calibrated policy parameters, but it is not used to compute the reported net returns. Every net-return, Sharpe-ratio, information-ratio, drawdown, and wealth statistic in this subsection is computed after deducting 15-bps proportional transaction costs directly from portfolio wealth.

Table~\ref{table:backtest} summarizes the out-of-sample performance
of the strategies. Both the mean-reverting and trending specifications produce optimal
portfolios that outperform the market and equal-weighted portfolios
over the 1995--2024 period. Among the frictional strategies, the
mean-reverting specification delivers the stronger relative performance, with
an information ratio of \(0.35\), compared with \(0.29\) for the
trending specification. The equal-weighted portfolio, by contrast,
has an information ratio of \(0.13\) over the same period. On an absolute basis,
we see the average rate of return and volatility of all of the portfolios
are quite similar, resulting in Sharpe ratios which are not materially 
different from each other. This is expected, as the stochastic control 
framework we considered in Section~\ref{sec:stochastic_control} formalized
an \emph{active} portfolio management problem in which we optimized performance
relative to a market benchmark.

\begin{table}[t!]
\centering
\begin{tabular}{|l|c|c|c|c|c|c|c|c|c|c|}
\hline
Portfolio
& Tot. Ret.
& Ret.
& Vol.
& SR\textsubscript{gross}
& SR\textsubscript{net}
& IR
& Turnover
& GE
& MDD
& RT
\\
\hline
$\lambda^{*,\mathrm{mr}}_0$
& 32.10 & 0.13 & 0.17 & 0.69 & 0.65 & 0.21
& 0.41 & 1.65 & 0.51 & 52
\\
$\lambda^{*,\mathrm{mr}}$
& \textbf{44.92} & \textbf{0.14} & 0.16 & 0.74 & \textbf{0.72} & \textbf{0.35}
& 0.15 & 1.46 & \textbf{0.50} & 50
\\
$\overline{\lambda}^{*,\mathrm{mr}}$
& 32.46 & 0.13 & 0.16 & 0.68 & 0.68 & 0.34
& 0.05 & \textbf{1.00} & \textbf{0.50} & \textbf{39}
\\
$\lambda^{*,\mathrm{trend}}_0$
& 32.49 & 0.13 & 0.16 & \textbf{0.76} & 0.69 & 0.26
& 0.59 & 1.17 & \textbf{0.50} & 41
\\
$\lambda^{*,\mathrm{trend}}$
& 31.41 & 0.13 & \textbf{0.15} & 0.73 & 0.69 & 0.29
& 0.34 & 1.11 & 0.51 & 41
\\
$\overline{\lambda}^{*,\mathrm{trend}}$
& 26.43 & 0.12 & 0.16 & 0.66 & 0.65 & 0.29
& 0.15 & \textbf{1.00} & 0.51 & 41
\\
$\pi^{\mathrm{ew}}$ $(\lambda\equiv1)$
& 24.33 & 0.12 & 0.17 & 0.59 & 0.58 & 0.13
& 0.07 & \textbf{1.00} & 0.56 & 44
\\
$\mu$ $(\lambda\equiv0)$
& 21.34 & 0.12 & \textbf{0.15} & 0.61 & 0.61 & ---
& \textbf{0.01} & \textbf{1.00} & \textbf{0.50} & 73
\\
\hline
\end{tabular}
\caption{\small
Summary of portfolio performance under monthly rebalancing and
15 bps proportional transaction costs over the out-of-sample period
1995--2024. Tot.\ Ret.\ denotes cumulative net return. Ret.\ is the
annualized mean net return,
\(\overline{R}^{\,\pi,\mathrm{net}}/\Delta t\), and Vol.\ is the
annualized volatility,
\(\operatorname{sd}(R^{\pi,\mathrm{net}})/\sqrt{\Delta t}\).
SR\textsubscript{gross} and SR\textsubscript{net} denote the gross
and net Sharpe ratios, respectively, using the risk-free rate from
Ken French's data library \cite{FRE}. IR denotes the information
ratio of the net active return
\(R^{\pi,\mathrm{net}}-R^{\mu,\mathrm{net}}\).
Turnover is the average \(\ell_1\)-distance between the pre- and
post-rebalancing portfolio weights,
\(\lVert\pi(t_k)-\pi(t_k-)\rVert_1\).
GE denotes average gross exposure,
\(\lVert\pi(t_k)\rVert_1\).
MDD and RT denote maximum drawdown and recovery time, respectively,
with RT measured in months.}
\label{table:backtest}
\end{table}

The differences between the two forecasting specifications can be
understood from the trajectories in
Figure~\ref{fig:alpha_port_perf}. The trend strategy responds to large, 
sudden changes in diversity, preempting the persistence of those changes
into the future. This mitigates relative drawdowns during episodes in which
diversity suddenly drops significantly,  such as the dot-com crash and 
the global financial crisis. The strategy can also take a negative
equal-weighted tilt when the estimated trend is sufficiently adverse,
as occurs around the COVID-19 market disruption.

The mean-reverting strategy instead responds primarily to the level of
diversity relative to its estimated long-run mean. Its target therefore
varies more slowly and reflects a longer horizon mean-reversion mechanism.
This distinction is particularly visible during the most recent years
of the sample, when diversity remains persistently below its historical
mean and the two forecasting specifications generate materially
different positions. The recent relative underperformance of the mean-reverting
strategy is particularly interesting. It is unclear if, as in the lead up 
to the dot-com crash, this represents a decline in market diversity which 
is unsustainable, or else a structural break in the mean-reverting dynamics of diversity.

\begin{figure}[ht!]
    \centering
    \includegraphics[scale=0.66]
    {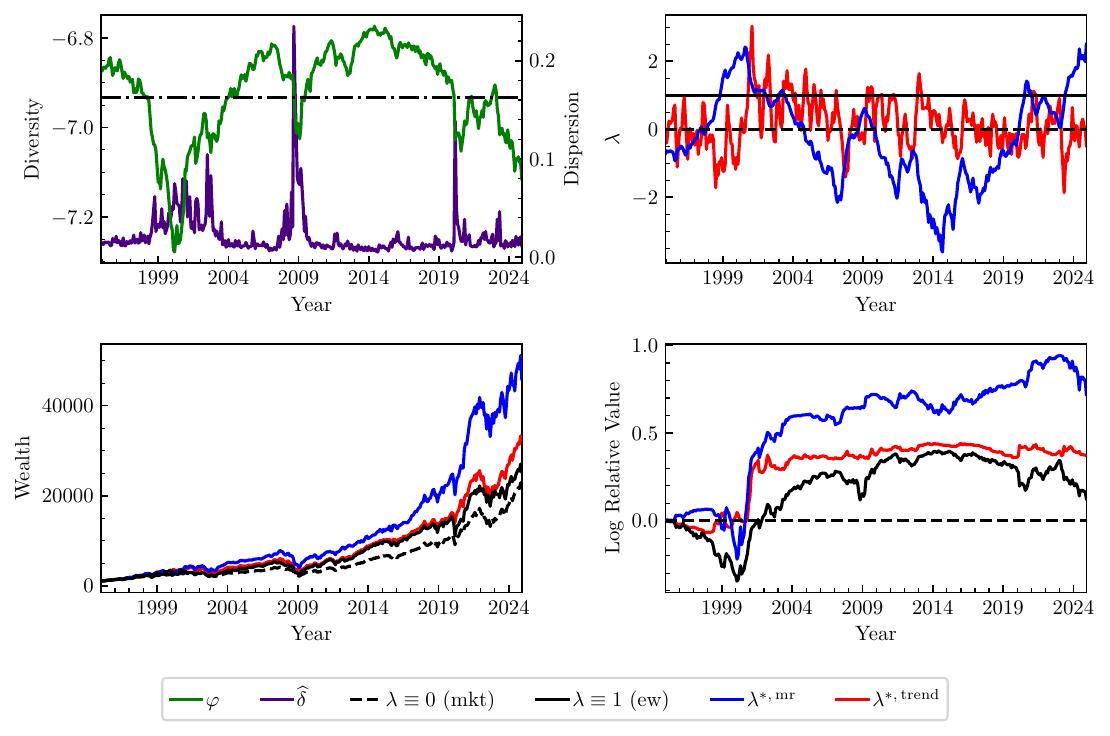}
    \caption{Top row: Historical S\&P 500 diversity and dispersion
    (left), and the optimal frictional tilt \(\lambda^*\) under the
    mean-reverting and trending specifications (right), together with the
    reference levels \(\lambda\equiv0\) (market) and
    \(\lambda\equiv1\) (equal weight). Bottom row: Corresponding
    out-of-sample portfolio wealth (left) and net log relative value
    with respect to the market (right), after 15 bps proportional
    transaction costs.}
    \label{fig:alpha_port_perf}
\end{figure}

The trending strategies exhibit substantially greater turnover than
their mean-reverting counterparts. This reflects the different time scales of
the two forecasting specifications: the trending model places greater
weight on recent changes in diversity, whereas the mean-reverting model
generates a more slowly varying target through long-run mean reversion.
For both specifications, moving from the frictionless target
\(\lambda_0^*\) to the frictional optimizer \(\lambda^*\) reduces
turnover while slightly improving the net information ratio. Under the
mean-reverting specification, turnover falls from \(0.41\) to \(0.15\) and the
information ratio rises from \(0.21\) to \(0.35\); under the trending
specification, turnover falls from \(0.59\) to \(0.34\) and the
information ratio rises from \(0.26\) to \(0.29\). For the mean-reverting model, the significant improvement in relative performance associated with the frictional optimizer cannot be entirely explained by a reduction in transaction costs. Rather, a slower-varying trading signal appears to better predict the mean-reverting dynamics of market diversity. This is further corroborated by the findings in the sensitivity analysis carried out in Section~\ref{sec:sensitivity_analysis}.

The mean-reverting strategy tends to take larger short positions, with an average gross exposure of \(1.46\), compared with
\(1.11\) for the trending strategy, approximately \(32\%\) higher. Clipping the tilt process to the interval \([0,1]\) leaves the information ratio of both strategies effectively unchanged, indicating their respective outperformance does not require excessive short-selling.

\subsubsection{Sensitivity analysis}\label{sec:sensitivity_analysis}

We finally examine sensitivity to the penalty parameters of the quadratic surrogate penalty.
For each model
\(m\in\{\mathrm{mr},\mathrm{trend}\}\), we perturb the calibrated
coefficients according to
\[
\Lambda_1^m
=
\widehat{\Lambda}_1^m 2^{N_1},
\qquad
\Lambda_2^m
=
\widehat{\Lambda}_2^m 2^{N_2},
\]
while holding the forecasting-model parameters and the risk-aversion
coefficient \(\gamma_m\) fixed. Figure~\ref{fig:ir_heatmap} reports
the resulting out-of-sample information ratios.

For both forecasting specifications, performance is relatively
insensitive to moderate changes in \(\Lambda_1\). This is consistent
with the structure of the optimal control, in which \(\gamma\) and
\(\Lambda_1\) enter the adjustment dynamics through the combination
\(1+\gamma+\Lambda_1\); around the calibrated values, moderate
perturbations of \(\Lambda_1\) therefore have only a limited effect on
the policy.

The two strategies differ more noticeably in their sensitivity to \(\Lambda_2\). The mean-reverting strategy remains relatively stable over a
broad range of trading-rate penalties. By contrast, sufficiently large values of \(\Lambda_2\) materially reduce the performance of the trending strategy and can eventually produce a negative information ratio. This difference is economically natural: the trending specification is designed to exploit short-run persistence and therefore requires greater responsiveness to changes in the estimated state, whereas the mean-reverting strategy is driven by more slowly varying mean-reversion dynamics. Excessive smoothing consequently destroys more of the forecasting value of the trending strategy.

\begin{figure}[ht!]
    \centering
    \includegraphics[scale=0.66]{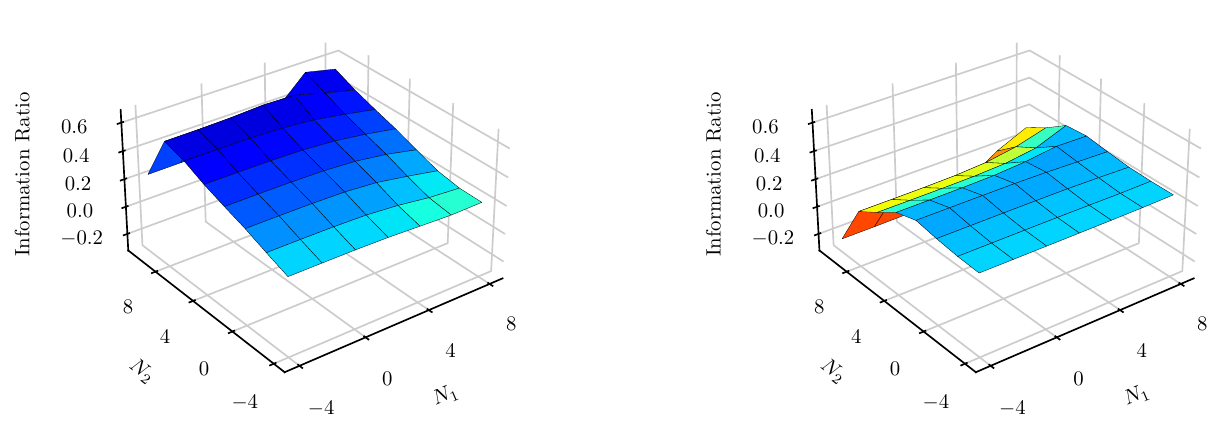}
    \caption{Out-of-sample information ratio of the frictional
    optimal portfolio under the mean-reverting specification (left) and the
    trending specification (right) as the quadratic cost coefficients
    are varied according to
    \(\Lambda_1=\widehat{\Lambda}_1 2^{N_1}\) and
    \(\Lambda_2=\widehat{\Lambda}_2 2^{N_2}\), relative to their
    calibrated baseline values. All other parameters are held fixed.}
    \label{fig:ir_heatmap}
\end{figure}

\section{Conclusion}
We develop a stochastic control framework for dynamically allocating between the market and equal-weighted portfolios, with a natural extension to functionally generated portfolios. Stochastic portfolio theory reduces the relative performance of these allocations to the dynamics of market diversity and dispersion, motivating our low-dimensional stochastic diversity--dispersion (SDD) model. Within this framework, the investor balances expected relative performance and active risk against an effect-calibrated quadratic surrogate for implementation frictions. The resulting control problem admits a linear forward--backward SDE representation and an explicit optimal trading rule.

For a mean-reverting SDD specification, the conditional forecasts entering the optimal strategy are available explicitly. Using historical S\&P 500 data, we calibrate the surrogate parameters in sample to the cumulative wealth effect of proportional transaction costs and evaluate out-of-sample performance by deducting those costs directly from portfolio wealth. The resulting strategy outperforms both the market and equal-weighted portfolios over the 1995--2024 evaluation period. A data-driven trending specification produces similar improvements and illustrates the flexibility of the framework.

Natural extensions include richer models of diversity and dispersion incorporating macroeconomic predictors or additional state variables, as in \cite{AFF07,KIM23}, and extensions to open markets that model constituent turnover and leakage explicitly, following Remark~\ref{rmk:leakage} and \cite{KK21}. Another direction is to incorporate proportional transaction costs or price impact directly into the wealth dynamics, as in \cite{I25}, rather than through the tractable quadratic surrogate used here.

\appendix
\section{Proofs} \label{sec:proofs}
\subsection{Proof of Theorem \ref{thm: FBSDE}}\label{app:thm proof}

\begin{proof}
We present a self-contained proof to ensure that the admissible requirements in our setting are satisfied. 
Since the objective functional is strictly concave over the admissible strategy set when $\Lambda_2>0$ and $1+\gamma+\Lambda_1>0$, its first-order condition is sufficient for global optimality.

Firstly, via direct calculation, we can see that $\lambda(t)$ given by~\eqref{eq: optimal control} and $\dot\lambda(t)$ given by~\eqref{eq: optimal trading} satisfy the forward SDE in~\eqref{fbsde:constant}, which yields the existence of the solution to~\eqref{fbsde:constant}. 

Second, it is straightforward to see that the optimal combination ratio $\lambda(t)$ given by~\eqref{eq: optimal control} belongs to $ \mathbf{L}_{\mathbb{F}}^{4}(T;\mathbb{R})\subset\mathbf{L}_{\mathbb{F}}^{2}(T;\mathbb{R})$. 
Since 
$$
\dot\lambda(t) = \frac{1+\gamma+\Lambda_1}{\Lambda_2}\E_t\left[\int_t^T\left(\frac{1+\gamma}{1+\gamma+\Lambda_1}\lambda_0^*(u)-\lambda(u)\right)du\right],
$$
we can also infer that $\dot\lambda\in \mathbf{L}_{\mathbb{F}}^{4}(T;\mathbb{R})\subset\mathbf{L}_{\mathbb{F}}^{2}(T;\mathbb{R})$. 
Further, given the integrability condition of $b_\varphi(t), \sigma_\varphi(t)$ and $\delta(t)$ by \eqref{cond:integrability}, 
with constants $C_1,C_2>0$ that depend only on $\gamma, \Lambda_1, \Lambda_2,T$,
we can further infer that 
\begin{align*}
\E\left[\int_0^T (\sigma_\varphi(t) \lambda(t))^2 dt\right]
&\leq\left(\E\left[\int_0^T \sigma_\varphi^4(t) dt\right]\E\left[\int_0^T \lambda^4(t) dt \right]\right)^{1/2}
\\&\leq \left(\E\left[\int_0^T \sigma_\varphi^4(t) dt\right]\right)^{1/2}\left(C_1 +  C_2\E\left[\int_0^T (\lambda_0^*(t))^4 dt\right]\right)^{1/2}
<\infty. 
\end{align*}
Moreover, we can infer that
\begin{align*}
\E\left[\int_0^T (\sigma_\varphi(t) \dot\lambda(t))^2  dt\right]
\leq \frac{1}{2} \E\left[\int_0^T\sigma_\varphi^4(t) dt + \int_0^T \dot\lambda^4(t) dt \right] <\infty,
\end{align*}
which satisfies the integrability condition~\eqref{frictional cond: integrability}.

Now we would like to establish uniqueness. Suppose that there exist two solutions $(\lambda^{(j)},\dot\lambda^{(j)})$, $j=1,2$.
Let's define the square integrable martingale $M^{(j)}$ via
\begin{align*}
\dot\lambda^{(j)}(t) + \frac{1+\gamma+\Lambda_1}{\Lambda_2}\int_0^t\left(\frac{1+\gamma}{1+\gamma+\Lambda_1}\lambda_0^*(u)-\lambda^{(j)}(u)\right)du
&= \frac{1+\gamma+\Lambda_1}{\Lambda_2}\E_t\left[\int_0^T\left(\frac{1+\gamma}{1+\gamma+\Lambda_1}\lambda_0^*(u)-\lambda^{(j)}(u)\right)du\right] \\
&=: M^{(j)}(t) \in\mathbf{L}_{\mathbb{F}}^{2}(T;\mathbb{R}),
\end{align*}
then we have that 
\begin{align*}
d \dot\lambda^{(j)} (t) = 
dM^{(j)} (t)  +   \frac{(1+\gamma+\Lambda_1)}{\Lambda_2}   \left(\lambda^{(j)}(t)- \frac{1+\gamma}{1+\gamma+\Lambda_1}\lambda_0^*(t)\right) dt, 
\quad j=1,2. 
\end{align*}
Using similar argument in~\cite{PP90}, we can infer that $\int_0^\cdot (\dot\lambda^{(1)}(t)-\dot\lambda^{(2)}(t))d\left(M^{(1)} - M^{(2)}\right)(t)$ is a true martingale,  hence we have the following estimation:
\begin{align*}
&\quad\E\left[\left(\dot\lambda^{(1)}(t) - \dot\lambda^{(2)}(t)\right)^2\right]
+\E\left[\int_t^T d\langle M^{(1)} - M^{(2)}\rangle(u)\right] 
\\&= \frac{1+\gamma+\Lambda_1}{\Lambda_2}
\left(\E\left[\left(\lambda^{(1)}(t) - \lambda^{(2)}(t)\right)^2\right] 
- \E\left[\left(\lambda^{(1)}(T) - \lambda^{(2)}(T)\right)^2\right]\right)
\\&\leq \frac{1+\gamma+\Lambda_1}{\Lambda_2}
\E\left[\left(\lambda^{(1)}(t) - \lambda^{(2)}(t)\right)^2\right], 
\end{align*}
which further yields
\begin{align*}
\E\left[\left(\lambda^{(1)}(t) - \lambda^{(2)}(t)\right)^2\right] 
& = \E\left[\left(\int_0^t \left(\dot\lambda^{(1)}(u) - \dot\lambda^{(2)}(u)\right) \ du\right)^2\right]
\\& \leq T\E\left[\int_0^t \left(\dot\lambda^{(1)}(u) - \dot\lambda^{(2)}(u)\right)^2 du\right]
\\&\leq T\frac{1+\gamma+\Lambda_1}{\Lambda_2}
\E\left[\int_0^t \left(\lambda^{(1)}(u) - \lambda^{(2)}(u)\right)^2 du\right].
\end{align*}
From Gronwall's inequality, we can see that 
$\E\left[\left(\lambda^{(1)}(t) - \lambda^{(2)}(t)\right)^2\right]=0$ for all $t\in[0,T]$, which indicates the uniqueness. 

It is  not hard to see that the FBSDE system~\eqref{fbsde:constant} shares the same linear structure with the FBSDE system in~\cite[Lemma 3.1]{KNT20}. For a  more general linear FBSDE system, ~\cite{Y99} provides necessary and sufficient conditions  for the solvability. 
\end{proof}

\subsection{Proof of Theorem~\ref{thm: optimal control logou}}~\label{app:log ou proof}
\begin{proof}
Since $\log\delta(t)$ is a Gaussian process, we have that
\begin{align}
\delta(u)
= \bar{\delta} 
\exp\left(e^{-\kappa_\delta(u-t)}\log \frac{\delta(t)}{\bar\delta} 
+ \nu_\delta e^{-\kappa_\delta(u-t)} \int_t^u e^{\kappa_\delta(v-t)} dW_\delta(v)\right)
\end{align}
Then we can easily calculate that, for $0\leq t\leq u$,
\begin{align*}
\E_t\left[\delta(u)^{-1}\right]
& = \frac{1}{\bar{\delta}} 
\exp\left(-e^{-\kappa_\delta(u-t)}\log \frac{\delta(t)}{\bar\delta} \right)
\E_t\left[ \exp\left(-\nu_\delta e^{-\kappa_\delta(u-t)} \int_t^u e^{\kappa_\delta(v-t)} dW_\delta(v)\right)\right]
\\& = \exp\left(-m_\delta(t, u) + \frac{1}{2} q_\delta (t,u)\right),
\end{align*}
and for $0\leq t\leq s\leq u$, by the moment generating function for multivariate Gaussian distribution, 
\begin{align*}
\E_t \left[\delta(u)^{-1}\sqrt{\delta(s)}\right]
& = \exp\left(\frac{1}{2}m_\delta(t,s)-m_\delta(t, u) + \frac{1}{2} q_\delta (t,u) +  \frac{1}{8} q_\delta (t,s) - \frac{1}{2} c(t;s,u)\right),
\end{align*}
where $m_\delta$, $q_\delta$, and $c$ are defined via~\eqref{eq: logou conditional mean} - \eqref{eq: logou conditional covariance}. 
Recalling that in the mean-reverting model, 
we have 
$b_\varphi(t) = \kappa_\varphi(\bar\varphi - \varphi(t))$ and 
$\sigma_\phi^2(u)=\nu_\phi^2\delta(u)$, then
\begin{align*}
\E_t[\lambda_0^*(u)]
=
\frac{1}{1+\gamma}
\left(
\frac{\kappa_\varphi}{\nu_\varphi^2}
\E_t\left[
\frac{\bar\varphi-\varphi(u)}{\delta(u)}
\right]
+
\frac{1+\nu_\varphi^2}{2\nu_\varphi^2}
\right).
\end{align*}
Since $\varphi(t)$ is mean-reverting, we can see that
\begin{align*}
\bar\varphi-\varphi(u)
=
(\bar\varphi-\varphi(t))e^{-\kappa_\varphi(u-t)}
-
\nu_\varphi
\int_t^u
e^{-\kappa_\varphi(u-s)}
\sqrt{\delta(s)}\,dW_\phi(s).
\end{align*}
Therefore,
\begin{align*}
\E_t\left[
\frac{\bar\varphi-\varphi(u)}
{\delta(u)}
\right]
&= (\bar\varphi-\varphi(t))e^{-\kappa_\phi(u-t)} \E_t[\delta(u)^{-1}]
-\nu_\varphi\E_t\left[ \delta(u)^{-1} \int_t^u e^{-\kappa_\varphi(u-s)} \sqrt{\delta(s)}\,dW_\varphi(s) \right].
\end{align*}
It remains to evaluate the second conditional expectation.
Fix \(u\), we define the martingale for \(r\in[t,u]\),
\[
M(r):=\E_r[\delta(u)^{-1}]
=
\exp\left(
-m_\delta(r,u)+\frac12q_\delta(r,u)
\right).
\]
Then \(M(u)=\delta(u)^{-1}\), and It\^o's formula yields
\[
dM(r)
=
-\nu_\delta e^{-\kappa_\delta(u-r)}
M(r)\,dW_\delta(r).
\]
Also define
\[
N(r)
:=
\int_t^r
e^{-\kappa_\phi(u-s)}
\sqrt{\delta(s)}\,dW_\varphi(s).
\]
We can easily infer that $M$ and $N$ are square integrable. 
Since
\(d\langle W_\phi,W_\delta\rangle(r)=\rho\,dr\),
\[
d\langle M,N\rangle(r)
=
-\rho\nu_\delta
e^{-(\kappa_\phi+\kappa_\delta)(u-r)}
M(r)\sqrt{\delta(r)}\,dr.
\]
Applying the It\^o product formula to \(M(r)N(r)\) and taking
conditional expectations gives
\[
\begin{aligned}
\E_t[\delta(u)^{-1}N(u)]
&=
-\rho\nu_\delta
\int_t^u
e^{-(\kappa_\phi+\kappa_\delta)(u-r)}
\E_t[M(r)\sqrt{\delta(r)}]\,dr
\\
&=
-\rho\nu_\delta
\int_t^u
e^{-(\kappa_\phi+\kappa_\delta)(u-r)}
\E_t[\sqrt{\delta(r)}\,\delta(u)^{-1}]\,dr,
\end{aligned}
\]
where the second equality follows from the tower property.
The required integrability follows from the finite positive and
negative moments of the lognormal process \(\delta\) on finite time
intervals.

Substituting these expressions into the preceding identity yields
\eqref{eq: ce log ou}.

\end{proof}

\section{Fama--French Factor Attribution}
\label{sec:fama_french_test}
We examine whether the returns of the SDD-based equal-weighted tilt strategies can be explained by standard equity factors. For each portfolio $\pi$, we regress its monthly net excess return on the Fama--French factors. The three-factor specification is
\[
R^\pi(t_k)-R_f(t_k)
=
\alpha^\pi
+\beta_m^\pi\bigl(R_m(t_k)-R_f(t_k)\bigr)
+\beta_s^\pi \operatorname{SMB}(t_k)
+\beta_h^\pi \operatorname{HML}(t_k)
+\epsilon^\pi(t_k).
\]
The five-factor specification additionally includes the profitability and investment factors, RMW and CMA. To maintain consistency with the portfolio benchmark, we use the net excess return of the S\&P 500 portfolio as the market factor; the remaining factors and the risk-free rate are obtained from the Ken French Data Library. All portfolio returns are net of proportional transaction costs. The sample contains 360 monthly observations from January 1995 through December 2024.

We consider the equal-weighted portfolio and the frictional strategies generated by the mean-reverting and trending SDD specifications. Tables~\ref{table:ff3} and~\ref{table:ff5} report the three- and five-factor regression results, respectively.

We compare three portfolios: the equal-weighted portfolio $\pi^\mathrm{ew}$, as well as the frictional optimal portfolio $\pi^\lambda$ under both the mean-reverting and trending SDD models. The results for the 3-factor and 5-factor regressions are displayed in Tables \ref{table:ff3} and \ref{table:ff5}, respectively.

\begin{table}[ht!]
\centering
\begin{tabular}{|l|c|c|c|c|c|}
\hline
 Portfolio & MKT-RF & SMB & HML & $12\widehat{\alpha}$ (\%) & $R^2$ (\%) \\
\hline
$\pi^{\text{mr}}$ & $0.981$  & $-0.012$ & $0.254$  & $2.47$  & $82.9$ \\
                      & $(40.832)$ & $(-0.377)$ & $(8.152)$ & $(1.943)$ & \\
\hline
$\pi^{\text{trend}}$  & $0.970$  & $0.051$  & $0.015$  & $1.50$  & $91.7$ \\
                      & $(61.784)$ & $(2.372)$ & $(0.724)$ & $(1.811)$ & \\
\hline
$\pi^{\mathrm{ew}}$   & $1.060$  & $0.225$  & $0.265$  & $-0.41$ & $94.4$ \\
                      & $(73.682)$ & $(11.348)$ & $(14.230)$ & $(-0.537)$ & \\
\hline
\end{tabular}
\caption{Fama--French 3-factor regression results. Monthly OLS $t$-statistics in parentheses.}
\label{table:ff3}
\end{table}
All three portfolios have market loadings close to one. In the three-factor specification, the equal-weighted portfolio has significant positive exposures to both the size and value factors. The mean-reverting strategy retains a significant value exposure but has little size exposure, while the trending strategy has a modest positive size loading and no significant value loading. Thus, the dynamic strategies materially alter the factor composition of the equal-weighted portfolio.

The annualized alpha estimates in the three-factor model are $2.47\%$ for the mean-reverting strategy and $1.50\%$ for the trending strategy, with monthly $t$-statistics of $1.94$ and $1.81$, respectively. These estimates are marginally significant at the $10\%$ level but not at the conventional $5\%$ level. The equal-weighted portfolio has a negative and statistically insignificant alpha estimate of $-0.41\%$.

Under the five-factor specification, the annualized alpha estimates decline to $1.54\%$ and $1.26\%$ for the mean-reverting and trending strategies, with $t$-statistics of $1.18$ and $1.47$. Neither estimate is statistically significant. The additional profitability and investment factors therefore explain part of the three-factor residual return, particularly for the mean-reverting strategy, while leaving the overall explanatory power of the regressions largely unchanged.
\begin{table}[ht!]
\centering
\setlength{\tabcolsep}{5pt}
\begin{tabular}{|l|c|c|c|c|c|c|c|}
\hline
 & MKT-RF & SMB & HML & RMW & CMA & $12\alpha$ (\%) & $R^2$ (\%) \\
\hline
$\pi^{\text{mr}}$ & $1.007$   & $0.038$  & $0.178$  & $0.123$  & $0.093$ & $1.54$  & $83.3$ \\
                      & $(38.88)$ & $(0.99)$ & $(4.00)$ & $(2.52)$ & $(1.44)$ & $(1.18)$ & \\
\hline
$\pi^{\text{trend}}$  & $0.980$   & $0.054$  & $-0.021$ & $0.003$  & $0.082$ & $1.26$  & $91.8$ \\
                      & $(57.60)$ & $(2.13)$ & $(-0.72)$ & $(0.10)$ & $(1.94)$ & $(1.47)$ & \\
\hline
$\pi^{\mathrm{ew}}$   & $1.077$   & $0.250$  & $0.212$  & $0.058$  & $0.084$ & $-0.96$ & $94.5$ \\
                      & $(69.48)$ & $(10.90)$ & $(7.99)$ & $(1.99)$ & $(2.17)$ & $(-1.23)$ & \\
\hline
\end{tabular}
\caption{Fama--French 5-factor regression results. Monthly OLS $t$-statistics given in parentheses.}
\label{table:ff5}
\end{table}

Overall, the factor regressions show that the SDD strategies generate exposures that differ meaningfully from those of the equal-weighted portfolio. The positive alpha estimates suggest that their out-of-sample returns are not fully captured by the three-factor model, but the evidence is not robust across factor specifications. We therefore interpret these regressions as descriptive factor attribution rather than conclusive evidence of persistent abnormal returns.

\clearpage
\bibliographystyle{plain}
\bibliography{main}
\end{document}